\title{Kruskal-based approximation algorithm for the multi-level Steiner tree problem} 
\author{Reyan~Ahmed}{University of Arizona, Tucson, United States}{abureyanahmed@email.arizona.edu}{}{}
\author{Faryad~Darabi~Sahneh}{University of Arizona, Tucson, United States}{faryad@email.arizona.edu}{}{}
\author{Keaton~Hamm}{University of Arizona, Tucson, United States}{hamm@email.arizona.edu}{}{}
\author{Stephen~Kobourov}{University of Arizona, Tucson, United States}{kobourov@cs.arizona.edu}{}{}
\author{Richard~Spence}{University of Arizona, Tucson, United States}{rcspence@email.arizona.edu}{}{}
\authorrunning{R.\ Ahmed, et al.} 
\keywords{multi-level, Steiner tree, approximation algorithms} 
\newcommand\OPT{\ensuremath{\mathrm{OPT}}\xspace}
\newcommand\QoSU{\ensuremath{\mathrm{C_1}}\xspace}
\newcommand\QoSNU{\ensuremath{\mathrm{C_{2a}}}\xspace}
\newcommand\Kruskalu{\ensuremath{\mathrm{KruskalMLST}}\xspace}
\newcommand\Kruskalnu{\ensuremath{\mathrm{GreedyMLST}}\xspace}
\newcommand\CMP{\ensuremath{\mathrm{CMP}}\xspace}
\begin{document}

\maketitle

\begin{abstract}
    We study the multi-level Steiner tree problem: a generalization of the Steiner tree problem in graphs where terminals $T$ require varying priority, level, or quality of service. In this problem, we seek to find a minimum cost tree containing edges of varying rates such that any two terminals $u$, $v$ with priorities $P(u)$, $P(v)$ are connected using edges of rate $\min\{P(u),P(v)\}$ or better. The case where edge costs are proportional to their rate is approximable to within a constant factor of the optimal solution. For the more general case of non-proportional costs, 
    this problem is hard to approximate
    with ratio $c \log \log n$, where $n$ is the number of vertices in the graph. 
    A simple greedy algorithm by Charikar et al., however,  provides a $\min\{2(\ln |T|+1), \ell \rho\}$-approximation in this setting, where $\rho$ is an approximation ratio for a heuristic solver for the Steiner tree problem and $\ell$ is the number of priorities or levels (Byrka et al. give a Steiner tree algorithm with $\rho\approx 1.39$, for example).
    
    In this paper, we describe a natural generalization to the multi-level case of the classical (single-level) Steiner tree approximation algorithm based on Kruskal's  minimum spanning tree algorithm. 
    We prove that this algorithm achieves an approximation ratio at least as good as Charikar et al., and experimentally performs better with respect to the optimum solution. We develop an integer linear programming formulation to compute an exact solution for the multi-level Steiner tree problem with non-proportional edge costs and use it to evaluate the performance of our algorithm on both random graphs and multi-level instances derived from SteinLib.
    
\end{abstract}


\newpage
\section{Introduction}
We study the following generalization of the Steiner tree problem where terminals have priorities, levels, or quality of service (QoS) requirements. Variants of this problem are known in the literature under different names including multi-level network design (MLND), quality-of-service multicast tree (QoSMT) ~\cite{Charikar2004ToN}, quality-of-service Steiner tree~\cite{Xue2001, Karpinski2005}, and Priority Steiner Tree~\cite{Chuzhoy2008}. Motivated by multi-level graph visualization, we refer to this problem as the \emph{multi-level Steiner tree} problem.
\begin{definition}[Multi-level Steiner tree (MLST)] \label{def:mlst}
Let $G=(V,E)$ be a connected graph, and $T\subseteq V$ be a subset of terminals. Each terminal $t \in T$ has a priority $P(t) \in \{1,2,\ldots,\ell\}$. A multi-level Steiner tree (MLST) is a tree $G'$ with edge rates $y(e) \in \{1,2,\ldots,\ell\}$ such that for any two terminals $u$, $v \in T$, the $u$--$v$ path in $G'$ uses edges of rate greater than or equal to $\min\{P(u), P(v)\}$.
\end{definition}
We use 1 for the lowest priority and $\ell$ for the highest, and assume without loss of generality that there exists $v \in V$ such that $P(v) = \ell$. If $\ell=1$, then Definition~\ref{def:mlst} reduces to the definition of Steiner tree.

The cost of an MLST $G'$ is defined as the sum of the edge costs in $G'$ at their respective rates. Specifically, for $1 \le i \le \ell$, we denote by $c_i(e)$ the cost of including edge $e$ with rate $i$, in which the cost of an MLST is $\sum_{e \in E(G')}c_{y(e)}(e)$. Naturally, an edge with a higher rate should be more costly, so we assume that $c_1(e) \le c_2(e) \le \ldots \le c_{\ell}(e)$ for all $e \in E$. The MLST problem is to compute a MLST with minimum cost.

We note that equivalent formulations~\cite{Charikar2004ToN, Chuzhoy2008} include a root (or source) vertex $r \in V$ in which the problem is to compute a tree rooted at $r$ such that the path from $r$ to every terminal $t \in T$ uses edges of rate at least as good as $P(t)$. One can observe that Definition~\ref{def:mlst} is equivalent to this formulation as we can fix the root to be any terminal $r\in T$ such that $P(r)=\ell$. In an optimized multilevel Steiner tree, each path from the root to any terminal uses non-increasing edge rates. Note that this becomes relevant for the discussion of the exact value of the approximation given by our algorithm and the state-of-the-art algorithm~\cite{Charikar2004ToN}. We use the phrase ``multi-level'' since a tree $G'$ with a root having top priority and edge rates $y(\cdot)$ induces a sequence of $\ell$ nested Steiner trees, where the tree induced by $\{e\in E: y(e) \ge i\}$ is a Steiner tree over terminals $T_i=\{t\in T: P(t) \ge i\}$ for $1 \le i \le \ell$.

We distinguish the special case with proportional costs, where the cost of an edge is equal to its rate multiplied by some ``base cost'' (e.g., $c_1(e)$). This is similar to the \emph{rate model} in~\cite{Charikar2004ToN} as well as the setup in~\cite{Karpinski2005}.
\begin{definition}\label{def:proportional}
An instance of the MLST problem has \emph{proportional costs} if $c_i(e) = ic_1(e)$ for all $e \in E$ and for all $i \in \{1,2,\ldots,\ell\}$. Otherwise, the instance has \emph{non-proportional costs}.
\end{definition}

For $u,v \in T$, we define $\sigma(u,v)$ to equal the cost of a minimum cost $u$--$v$ path in $G$ using edges of rate $\min\{P(u), P(v)\}$. 
In other words, $\sigma(u,v)$ represents the minimum possible cost of connecting $u$ and $v$ using edges of the appropriate rate. Note that $\sigma$ is symmetric, but does not satisfy the triangle inequality, and is not a metric. Lastly, we denote by $H_k$ the $k^{\text{th}}$ harmonic number given by $H_k = 1 + \frac{1}{2} + \ldots + \frac{1}{k}$.

\subsection{Related work}
The Steiner tree (ST) problem admits a simple $2\left(1 - \frac{1}{|T|}\right)$-approximation (see Section~\ref{subsec:preliminaries-mst}). Currently, the best known approximation ratio is $\rho = \ln 4 + \varepsilon \approx 1.39$ by Byrka et al.~\cite{Byrka2013}. It is NP-hard to approximate the ST problem with ratio better than $\frac{96}{95} \approx 1.01$~\cite{Chlebnik2008}.

In~\cite{MLST2018}, simple top-down and bottom-up approaches are considered for the MLST problem with proportional costs. In the top-down approach, a Steiner tree is computed over terminals $\{v \in T: P(v) = \ell\}$. For $i = \ell-1$, \ldots, 1, the Steiner tree over terminals $\{v \in T: P(v) = i+1\}$ is contracted into a single vertex, and a Steiner tree is computed over terminals with $P(v) = i$. In the bottom-up approach, a Steiner tree is computed over all terminals, which induces a feasible solution by setting the rate of all edges to $\ell$.  These approaches are $(\frac{\ell+1}{2})\rho$- and $\ell \rho$-approximations, respectively~\cite{MLST2018} (moreover, these bounds are tight). It is worth noting that the bottom-up approach can perform arbitrarily poorly in the non-proportional setting. 

If edge costs are proportional, Charikar et al.~\cite{Charikar2004ToN} give a simple $4\rho$-approximation algorithm (which we later denote by \QoSU) by rounding the vertex priorities up to the nearest power of 2, then computing a $\rho$--approximate Steiner tree for the terminals at each rounded-up priority. They then give an $e\rho$-approximation for the same problem (using the $1.55$-approximation algorithm to compute Steiner tree provided by~\cite{Robins2005}, hence obtaining $e\rho\approx 4.213$). Karpinski et al.~\cite{Karpinski2005} tighten the analysis from~\cite{Charikar2004ToN} to show that this problem admits a 3.802-approximation with an unbounded number of priorities. Ahmed et al.~\cite{MLST2018} generalize the above techniques by considering a composite heuristic which computes Steiner trees over a subset of the priorities, and show that this achieves a $2.351\rho \approx 3.268$-approximation for $\ell \le 100$.  They provide experimental comparisons of the simple top-down, bottom-up, $4\rho$-approximation of Charikar et al.~\cite{Charikar2004ToN}, and a generalized composite algorithm. The experiments in~\cite{MLST2018} show that the bottom-up approach typically provides the worst performance while the composite algorithm typically performs the best, and these results match the theoretical guarantees. 

For non-proportional costs, which is the more general setting, Charikar et al.~\cite{Charikar2004ToN} give a $\min\{2(\ln |T|+1), \ell \rho\}$-approximation for QoSMT, consisting of taking the better solution returned by two sub-algorithms (which we denote by \QoSNU and $\mathrm{C_{2b}}$, Section~\ref{section:charikar}). On the other hand, Chuzhoy et al.~\cite{Chuzhoy2008} show that PST cannot be approximated with ratio better than $\Omega(\log \log n)$ in polynomial time unless NP$\,\subseteq\,$DTIME$(n^{O(\log\log\log n)})$. However, the problem setup for PST~\cite{Chuzhoy2008} is slightly more specific; each edge has a single cost $c_e$ and a Quality of Service (priority) given as input, and a solution consists of a tree such that the path from the root to each terminal $t$ uses edges of QoS at least as good as $P(t)$.  


\subsection{Our contributions}
In this paper, we propose approximation algorithms for the MLST problem based on Kruskal's and Prim's algorithms for computing a minimum spanning tree (MST).  We show that the Kruskal-based algorithm is a $2 \ln |T|$-approximation even for non-proportional costs, matching the state-of-the-art algorithms.  An interesting feature of this algorithm is that for the single level case, it reduces to the standard Kruskal approximation to the Steiner tree problem, which is not the case of other state-of-the-art algorithms for MLST. We also show that, somewhat surprisingly, a natural approach based on Prim's algorithm can perform rather poorly.  We then describe an integer linear program (ILP) to compute exact solutions to the MLST problem given non-proportional edge costs and evaluate the approximation ratios of the proposed approximation algorithms experimentally. Specifically, we provide an experimental comparison between the algorithm of Charikar et al.~\cite{Charikar2004ToN} and our Kruskal-based algorithm, in which the latter performs better with respect to the optimum a majority of the time in both proportional and non-proportional settings.  Experiments are performed on random graphs from various generators as well as instances of the MLST problem derived from the SteinLib library~\cite{KMV00} of hard ST instances. Finally, we describe a class of graphs for which the Kruskal-based algorithm always performs significantly better than that by Charikar et al.~\cite{Charikar2004ToN}. 


\section{Preliminaries}

In this section, we review some existing approximation algorithms that are pivotal for the subsequent developments in this paper.

\subsection{Kruskal- and Prim-based approximations for the ST problem}\label{subsec:preliminaries-mst}

A well-known $2\left(1 - \frac{1}{|T|}\right)$-approximation algorithm for the ST problem first constructs the \emph{metric closure graph} $\tilde{G}$ over $T$: the complete graph $K_{|T|}$ where each vertex corresponds to a terminal in $T$, and each edge has weight equal to the length of the shortest path between corresponding terminals. An MST over $\tilde{G}$ induces $|T|-1$ shortest paths in $G$; combining all induced paths and removing cycles yields a feasible Steiner tree whose cost is at most $2\left(1 - \frac{1}{|T|}\right)$ times the optimum.

For computing an MST over $\tilde{G}$, one can use any known MST algorithm (e.g., Kruskal's, Prim's, or Bor\r{u}vka's algorithm). However, one can directly construct a Steiner tree from scratch based on these MST algorithms without the need to construct $\tilde{G}$; Poggi de Arag\~{a}o and Werneck provide details for such implementations \cite{de2002implementation} (see also \cite{takahashi1980approximate,wu1986faster}).

Specifically, the Prim-based approximation algorithm for the ST problem due to Takahashi and Matsuyama \cite{takahashi1980approximate} grows a tree rooted at a fixed terminal. In each iteration, the closest terminal not yet connected to the tree is connected through its shortest path. The process continues for $|T|-1$ iterations until all terminals are spanned. The resulting Steiner tree achieves the $2\left(1 - \frac{1}{|T|}\right)$ approximation guarantee \cite{takahashi1980approximate}. The Kruskal-based algorithm for the ST problem due to Wang \cite{wang1985multiple} maintains a forest initially containing $|T|$ singleton trees. In each iteration, the closest pair of trees is connected via a shortest path between them. The process continues for $|T|-1$ iterations until the resulting forest is a tree.  Widmayer showed that this algorithm achieves the $2\left(1 - \frac{1}{|T|}\right)$ bound \cite{widmayer1986approximation}.

\subsection{Review of the QoSMT algorithm of Charikar et al.}\label{section:charikar}

Charikar et al.~\cite{Charikar2004ToN} give a $\min\{2(\ln|T|+1), \ell \rho\}$-approximation for QoSMT which we denote by $\mathrm{C_2}$, consisting of taking the better of the solutions returned by two sub-algorithms (denoted $\mathrm{C_{2a}}$ and $\mathrm{C_{2b}}$). For this section, we focus primarily on the $2(\ln|T|+1)$-approximation, Algorithm~\QoSNU.  The $\ell \rho$-approximation, Algorithm~$\mathrm{C_{2b}}$, simply computes a $\rho$-approximate Steiner tree over the terminals of each priority separately, then merges the $\ell$ computed trees and prunes cycles to output a tree; this leads to a better approximation ratio if $\ell \ll |T|$.

The first sub-algorithm (\QoSNU) sorts the terminals $T$ by decreasing priority $P(\cdot)$, starting with a root node $r$ (here, we may treat the root as any terminal with priority $\ell$). Then, for $i=1,\ldots, |T|$, the $i^{\text{th}}$ terminal $t_i$ is connected to the existing tree spanning the previous $i-1$ terminals using the minimum cost path with edges of rate at least $P(t_i)$, where the cost of this path is defined as the \emph{connection cost} of $t_i$. 

The authors show that for $1 \le m \le |T|$, the $m^{\text{th}}$ most expensive connection cost is at most $\frac{2\OPT}{m}$, which implies that the total cost is at most $2\OPT\left(1+\frac{1}{2}+\frac{1}{3} + \ldots + \frac{1}{|T|}\right)\le 2(\ln |T|+1) \OPT$. While not explicitly mentioned in~\cite{Charikar2004ToN}, this approximation ratio is roughly tight (see Figure~\ref{fig:kruskal-example-tightness}). Algorithm~\QoSNU can be implemented by running Dijkstra's algorithm from $t_i$ 
until a vertex already in 
the tree is encountered.  The running time of \QoSNU is roughly $|T|$ times the running time of Dijkstra's algorithm, or $O(nm+n^2 \log n)$~\cite{Charikar2004ToN}.

\section{Kruskal-based MLST algorithms}
\label{section:kruskal}
We propose Algorithm~\Kruskalu for the MLST problem. The main  distinction compared to Algorithm \QoSNU is that the subsequent algorithm connects the ``closest'' pairs of terminals first, rather than connecting terminals in order of priority. 
Algorithm~\Kruskalu proceeds as follows: initializing $S=T$, while $|S| \ge 1$, find terminals $u, v \in S$ with $P(u) \ge P(v)$ which minimize the cost of connecting them. If $\mathcal{P}$ is the $u$--$v$ path chosen, then the rate of each edge in $\mathcal{P}$ is upgraded to $P(v)$ (if its rate is less). Remove $v$ from $S$. We will say that $v$ is \emph{connected} at the current iteration. When $|S|=1$, if there are no cycles, then the resulting tree is a feasible MLST rooted at some vertex $r$ with $P(r) = \ell$.  Otherwise, we can prune one edge from each cycle with the lowest rate to produce a tree. We note that \Kruskalu takes $|T|-1$ iterations while \QoSNU takes $|T|$ iterations; this follows as the setting for MLST does not specify a root vertex while QoSMT does. As such, there is a small constant difference in the approximation ratios, which is not significant.

When finding $u, v \in S$ which minimize $\sigma(u,v)$, Algorithm~\Kruskalu takes into account edges which have already been included at lower rates. In other words, line~\ref{line:kruskal}
seeks a pair of vertices $(u,v)$ which minimizes the cost of ``upgrading'' the rates of some edges so that $u$ and $v$ are connected via a path of rate $\min\{P(u), P(v)\}$. We denote this cost by $\sigma'(u,v)$, and observe that $\sigma'(u,v) \le \sigma(u,v)$.
\begin{algorithm}[h!]
\renewcommand{\thealgorithm}{}
\caption{\textsc{KruskalMLST}$(\text{graph }G, \text{ priorities }P, \text{ costs }c)$}
\begin{algorithmic}[1]
\State Initialize $y(e) = 0$ for $e \in E$
\State $c'_i(e)=c_i(e)$ for $i\in[\ell],e\in E$
\State $S = T$
\While{$|S|>1$}\label{line:kruskal-while}
\State Compute $\sigma'(\cdot,\cdot)$ for all $(\cdot,\cdot) \in S \times S$
\State Find $u,v \in S$ with $P(u) \ge P(v)$ which minimizes $\sigma'(u,v)$\label{line:kruskal}
\State $\mathcal{P} = $ path chosen of cost $\sigma'(u,v)$
\State $y(e) = \max\{y(e), P(v)\}$ for $e \in \mathcal{P}$
\State $c'_i(e)=\max\{0,c_i(e)-c_{y(e)}(e)\}$ for $e \in \mathcal{P}$ and $i\in \{1,\dots,\ell\}$
\State $S = S\setminus \{v\}$
\EndWhile
\State \Return{$y$}
\end{algorithmic}
\end{algorithm}

\begin{theorem}\label{thm:2lnt}
Algorithm~\Kruskalu is a $2\ln|T|$-approximation to the MLST problem. 
\end{theorem}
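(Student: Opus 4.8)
The plan is to mirror the harmonic-sum analysis that Charikar et al.\ use for \QoSNU, but to replace their priority-ordered charging with a doubling argument tailored to the Kruskal-style selection rule. First I would observe that the total cost returned by \Kruskalu\ equals the sum, over the $|T|-1$ iterations, of the residual connection costs $\sigma'(u,v)$ actually paid: the bookkeeping in the update $c'_i(e)=\max\{0,c_i(e)-c_{y(e)}(e)\}$ guarantees that the incremental increase in $\sum_e c_{y(e)}(e)$ at each iteration is exactly $\sigma'(u,v)$, and pruning cycles at the end can only decrease the cost. Hence it suffices to bound the connection cost incurred when $|S|=j$ by $2\OPT/j$ and then sum $\sum_{j=2}^{|T|} 2\OPT/j = 2\OPT(H_{|T|}-1)\le 2\OPT\ln|T|$.

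The heart of the argument is the following claim: for any $S\subseteq T$ there exist $u,v\in S$ with $\sigma(u,v)\le 2\OPT/|S|$. To prove it I would fix an optimal solution, let $T^\ast_S$ be the minimal subtree of \OPT\ spanning $S$ (so that its cost at the optimal rates is at most \OPT), double every edge, and take an Euler tour rooted at a terminal of $S$. This tour visits the $|S|$ terminals in some cyclic order $s_0,s_1,\dots$, and between consecutive terminals the walk traces exactly the tree path in $T^\ast_S$; consequently each edge is traversed twice and the optimal-rate costs of these $|S|$ consecutive tree paths sum to $2\,\mathrm{cost}(T^\ast_S)\le 2\OPT$. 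The crucial point is that, because \OPT\ is a feasible MLST, every edge on the tree path between $s_i$ and $s_{i+1}$ has rate at least $\min\{P(s_i),P(s_{i+1})\}$; therefore the cost of connecting $s_i$ and $s_{i+1}$ at that rate is at most the optimal-rate cost of the path, and in particular $\sigma(s_i,s_{i+1})$ is bounded by it. Averaging over the $|S|$ consecutive pairs yields a pair with $\sigma\le 2\OPT/|S|$.

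To finish, I would note that \Kruskalu\ always selects the pair minimizing $\sigma'$, and since $\sigma'(u,v)\le\sigma(u,v)$ for every pair, the connection cost paid when $|S|=j$ is at most $\min_{u,v\in S}\sigma(u,v)\le 2\OPT/j$ by the claim. Plugging these bounds into the telescoped sum from the first step gives the stated $2\ln|T|$ ratio, and in fact the sum starting at $j=2$ rather than $j=1$ is what produces the small improvement over the $2(\ln|T|+1)$ bound of \QoSNU.

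I expect the main obstacle to be the rate bookkeeping inside the doubling argument: one must verify that the Euler-tour segments between consecutive terminals really are tree paths (so that each optimal edge is charged only twice) and, simultaneously, that feasibility of \OPT\ forces every such path to consist of edges whose rate is high enough that the min-rate connection cost is dominated by the optimal-rate cost. The fact that $\sigma$ is not a metric means I cannot shortcut the Euler tour to a Hamiltonian path as in the classical single-level MST argument; instead I only extract a single cheap consecutive pair, which is exactly what the greedy Kruskal selection needs.
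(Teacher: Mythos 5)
Your proposal is correct and follows essentially the same route as the paper's proof: the same per-iteration lemma that when $|S|=m$ some pair of remaining terminals satisfies $\sigma'(u,v)\le\sigma(u,v)\le 2\OPT/m$, proved via a doubled/depth-first traversal of the minimal subtree of the optimum spanning $S$, followed by the harmonic sum from $m=2$ to $|T|$. The one point you flag as an obstacle (tour segments versus tree paths) is handled exactly as you suspect: each segment's cost dominates the corresponding tree-path cost and the segments partition the tour of cost $2\OPT$, which is all the averaging step needs.
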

\begin{proof}
Define the \emph{connection cost} of $v$ to be $\sigma'(u,v)$ (line 6), and note that the cost of the returned solution is the sum of the connection costs over all terminals $T\setminus \{r\}$.   Now let $t_1$, $t_2$, \ldots, $t_{|T|-1}$ be the terminals in sorted order by which they were connected, and let $\OPT$ denote the cost of a minimum cost MLST for the instance. We have the following lemma.

\begin{lemma}\label{lemma:2opt}
For $2 \le m \le |T|$, consider the iteration of Algorithm~\Kruskalu when $|S|=m$. Let $t_i$ be the terminal connected during this iteration (where $i = |T|+1-m$). Then the connection cost of $t_i$ is at most $\frac{2\OPT}{m}$.
\end{lemma}
\begin{proof}
Note that immediately before $t_i$ is connected, we have $S = \{t_i, t_{i+1}, \ldots, t_{|T|-1}, r\}$ of size $m$. Consider the optimum solution $\mathcal{T}^*$ for the instance, and let $\mathcal{T}'$ be the minimal subtree of $\mathcal{T}^*$ containing all terminals in $S$. The total cost of the edges in $\mathcal{T}'$ is at most $\OPT$. Perform a depth-first traversal starting from any terminal in $\mathcal{T}'$ and returning to that terminal. Since every edge in $\mathcal{T}'$ is traversed twice, the cost of the traversal is at most $2\OPT$.

Consider pairs of consecutive terminals $t_j$, $t_k$ visited for the first time along the traversal. The path connecting $t_j$ and $t_k$ in $\mathcal{T}^\prime$ necessarily uses edges of rate at least $\min\{P(t_j), P(t_k)\}$. Then, the cost of the edges along this path is at least $\sigma(t_j, t_k)$. There are $m$ pairs of consecutive terminals along the traversal (including the pair containing the first and last terminals visited), and the sum of the costs of these $m$ paths is at most $2\OPT$. Hence, some pair $t_j$, $t_k$ of terminals is connected by a path of cost $\le \frac{2\OPT}{m}$ in the optimum solution, implying that for this pair $t_j, t_k$, we have $\sigma'(t_j, t_k) \le \sigma(t_j, t_k) \le \frac{2\OPT}{m}$. Since~\Kruskalu selects the pair which minimizes $\sigma'(\cdot, \cdot)$, the connection cost of $t_i$ is at most $\frac{2\OPT}{m}$.
\end{proof}

Lemma~\ref{lemma:2opt} immediately implies Theorem~\ref{thm:2lnt}. Indeed, summing from $m=2$ to $m=|T|$, the total cost is at most $2\OPT\left(\frac{1}{2} + \frac{1}{3} + \ldots + \frac{1}{|T|}\right) = 2\OPT(H_{|T|}-1) \le 2\ln |T| \OPT$.
\end{proof}

An interesting note is that Algorithm \Kruskalu reduces to the Kruskal-based algorithm~\cite{wang1985multiple} for computing a Steiner tree, when there are no priorities on the terminals (i.e., the single level case when $\ell=1$).  As mentioned earlier, this is a $2(1-\frac{1}{|T|})$-approximation, whereas algorithm \QoSNU is still a  $2\ln|T|$ one, and this is an advantage of the proposed algorithm.

A simple variant of our algorithm, \Kruskalnu, yields the same theoretical approximation ratios and is easier to implement. The difference is that \Kruskalnu does not update the costs $\sigma$ at each iteration of the while loop.

\begin{algorithm}[h!]
\renewcommand{\thealgorithm}{}
\caption{$\Kruskalnu(\text{graph }G, \text{ priorities }P, \text{ costs }c)$}
\begin{algorithmic}[1]
\State Initialize $y(e) = 0$ for $e \in E$
\State $S = T$
\While{$|S|>1$}\label{line:kruskalnu-while}
\State Find $u,v \in S$ with $P(u) \ge P(v)$ which minimizes $\sigma(u,v)$\label{line:kruskalnu}
\State $\mathcal{P} = $ path chosen of cost $\sigma(u,v)$
\State $y(e) = \max(y(e), P(v))$ for $e \in \mathcal{P}$
\State $S = S\setminus \{v\}$
\EndWhile
\State \Return{$y$}
\end{algorithmic}
\end{algorithm}

\begin{theorem}
Algorithm \Kruskalnu is a $2\ln|T|$-approximation to the MLST problem.
\end{theorem}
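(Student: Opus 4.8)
The plan is to mirror the proof of Theorem~\ref{thm:2lnt} essentially verbatim, since the only difference between \Kruskalnu and \Kruskalu is that \Kruskalnu never discounts edge costs across iterations, working always with the static quantity $\sigma(u,v)$ rather than the updated $\sigma'(u,v)$. First I would define the connection cost of a terminal $v$ connected during an iteration to be $\sigma(u,v)$, where $(u,v)$ is the pair selected on line~\ref{line:kruskalnu}, and observe that the cost of the tree returned by \Kruskalnu is \emph{at most} the sum of these connection costs over $T \setminus \{r\}$. The reason it is ``at most'' rather than ``equal to'' is that \Kruskalnu always pays the full static cost $\sigma(u,v)$ in its accounting, whereas the actual tree may reuse edges already upgraded in earlier iterations, so the realized cost can only be smaller. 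This monotone overcounting is harmless: an upper bound on the sum of connection costs is still an upper bound on the cost of the output.

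Next I would restate and prove the analogue of Lemma~\ref{lemma:2opt}: for $2 \le m \le |T|$, during the iteration with $|S| = m$, the connection cost $\sigma(u,v)$ of the terminal connected is at most $\frac{2\OPT}{m}$. The argument is identical to the one already given. I take the optimal MLST $\mathcal{T}^*$, restrict to the minimal subtree $\mathcal{T}'$ spanning the current set $S$ of $m$ terminals (cost at most $\OPT$), run a depth-first closed traversal of cost at most $2\OPT$, and partition it into $m$ arcs between consecutive first-visited terminals. Each such arc joining $t_j$ and $t_k$ uses edges of rate at least $\min\{P(t_j),P(t_k)\}$ inside the feasible tree $\mathcal{T}^*$, so its cost is at least $\sigma(t_j,t_k)$. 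Averaging, some pair $t_j,t_k$ in $S$ satisfies $\sigma(t_j,t_k) \le \frac{2\OPT}{m}$. Since \Kruskalnu selects the pair minimizing $\sigma(\cdot,\cdot)$ over $S \times S$, the chosen connection cost is at most $\frac{2\OPT}{m}$.

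Finally I would sum the per-iteration bounds from $m=2$ to $m=|T|$, obtaining total cost at most $2\OPT\left(\frac{1}{2} + \frac{1}{3} + \cdots + \frac{1}{|T|}\right) = 2\OPT(H_{|T|}-1) \le 2\ln|T|\,\OPT$, which is the claimed ratio.

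I do not expect a genuine obstacle here, since the combinatorial core (the DFS-traversal averaging argument bounding the $m$th connection cost by $\frac{2\OPT}{m}$) depends only on $\sigma$ and on the feasibility of $\mathcal{T}^*$, both of which are unchanged. The one subtlety worth stating explicitly is precisely the point that \Kruskalnu charges $\sigma(u,v) \ge \sigma'(u,v)$ at each step, so its accounted cost dominates the actual output cost; the proof therefore bounds a quantity that is itself an upper bound on what the algorithm produces, and the same $2\ln|T|$ guarantee follows a fortiori.
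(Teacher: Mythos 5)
Your proposal is correct and matches the paper's intent exactly: the paper proves this theorem by simply noting that the argument for Theorem~\ref{thm:2lnt} carries over with $\sigma$ in place of $\sigma'$, which is precisely what you do. Your explicit observation that the sum of the static connection costs $\sigma(u,v)$ only overcounts the realized cost of the output tree is a worthwhile clarification of a point the paper leaves implicit.
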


The proof follows the same argument as that for Theorem \ref{thm:2lnt}; indeed the use of $\sigma'$ implies that \Kruskalu should perform better than \Kruskalnu, but is more costly to run.

\subsection{Tightness} 
The approximation ratio for Algorithms~\QoSNU~\cite{Charikar2004ToN} and~\Kruskalnu is tight up to a constant, even if $\ell = 1$ or if $|E|=O(|V|)$. As a tightness example, we use a graph construction $(G_i)_{i \ge 0}$ given by Imase and Waxman~\cite{imase91dst} for the inapproximability of the dynamic Steiner tree problem. Let $G_0$ contain two vertices $v_0$, $v_1$ with an edge of cost 1 connecting them. We say that $v_0$ and $v_1$ are depth zero vertices. For $i \ge 1$, graph $G_i$ is obtained by replacing each edge $uv$ in $G_{i-1}$ with two depth $i$ vertices $w_1$, $w_2$, and adding edges $uw_1$, $w_1 v$, $uw_2$, and $w_2 v$.

Let $G = G_k$ for sufficiently large $k$, let $\ell=1$ (i.e., the Steiner tree problem), and let each edge of $G_i$ have a cost of $\frac{1}{2^i}$, so that the cost of any shortest $v_0$-$v_1$ path is 1. Let the terminals $T$ be the vertices of some $v_0$-$v_1$ path (Figure~\ref{fig:kruskal-example-tightness}, left), so that $\OPT = 1$. Note that any $u$-$v$ path contains $2^k$ edges, so $|T|=2^k + 1$. Algorithm~\QoSNU first sorts the terminals by priority; since all terminals in $G_k$ have the same priority, we consider a worst possible ordering where $T$ is ordered in increasing depth, with $v_0$ the root. In this case, it is possible that Algorithm~\QoSNU connects $v_1$ to $v_0$ via a shortest path which does not include other terminals, then connects subsequent terminals via shortest paths which include no other terminal, as shown in Figure~\ref{fig:kruskal-example-tightness}. Conversely in the worst case, Algorithm~\Kruskalnu may connect depth $k$, $k-1$, $k-2$, \ldots terminals in order while avoiding previously-used paths, as Algorithm~\Kruskalnu does not consider existing edges. In both cases, the cost of the returned solution is
\[\textrm{Cost} = \frac{1}{2}k + 1 = \frac{1}{2}\log_2 (|T|-1) + 1 \ge \frac{1}{2}\left(\log_2 |T| + 1\right) \OPT \approx \left(0.72 \ln |T| + \frac{1}{2}\right)\OPT.\]

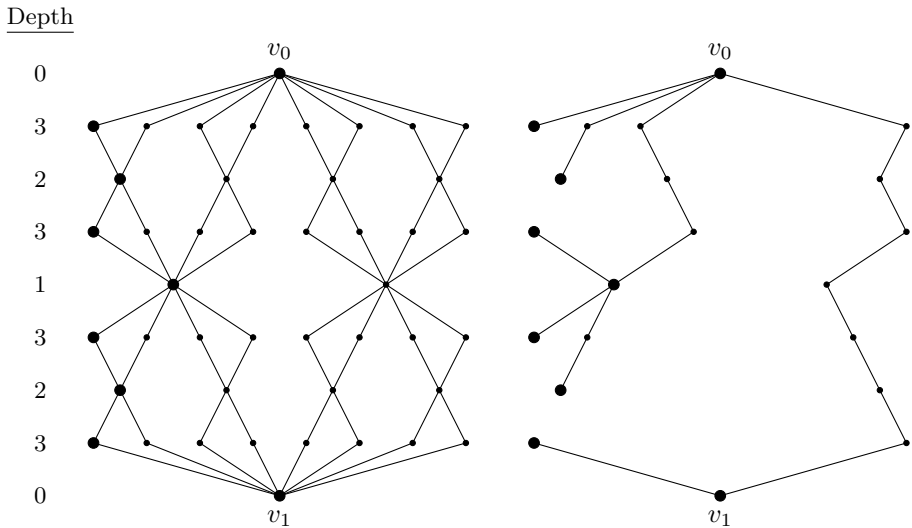
\begin{figure}[h!]
    \centering
    \begin{subfigure}[b]{0.49\textwidth}
    \begin{tikzpicture}[scale=0.7]
    
	\foreach \y in {4,-4}{
		\draw [fill=black] (0, \y) circle [radius=0.1];
	}
	\foreach \x in {-2,2}{
		\ifthenelse{\x=-2}
		{\def\r{0.1}}
		{\def\r{0.05}}
		\draw [fill=black] (\x, 0) circle [radius=\r];
	}
	\foreach \x in {-3,-1,1,3}{
		\ifthenelse{\x=-3}
		{\def\r{0.1}}
		{\def\r{0.05}}
		\foreach \y in {2,-2}{
			\draw [fill=black] (\x, \y) circle [radius=\r];
		}
		\draw (\x, 2) -- (\x+0.5, 3);
		\draw (\x, 2) -- (\x-0.5, 3);
		\draw (\x, -2) -- (\x+0.5, -3);
		\draw (\x, -2) -- (\x-0.5, -3);
		\draw (\x, 2) -- (\x+0.5, 1);
		\draw (\x, 2) -- (\x-0.5, 1);
		\draw (\x, -2) -- (\x+0.5, -1);
		\draw (\x, -2) -- (\x-0.5, -1);

	}
	\foreach \x in {-4,...,3}{
		\ifthenelse{\x=-4}
		{\def\r{0.1}}
		{\def\r{0.05}}
		\foreach \y in {3,1,-1,-3}{
    		\draw [fill=black] (\x+0.5, \y) circle [radius=\r];
    	}
    \ifthenelse{\x < 0}
	{\draw (\x+0.5,1) -- (-2,0);
	\draw (\x+0.5,-1) -- (-2,0);}
	{\draw (\x+0.5,1) -- (2,0);
	\draw (\x+0.5,-1) -- (2,0);}
	}

	\foreach \x in {-4,...,3}{
		\draw (0,4) -- (\x+0.5,3);
		\draw (0,-4) -- (\x+0.5,-3);
	}
	\node[above] at (0,4.1) {$v_0$};
	\node[below] at (0,-4.1) {$v_1$};
	\node at (-4.5,5) {\underline{\small Depth}};
	\node at (-4.5,4) {\small 0};
	\node at (-4.5,3) {\small 3};
	\node at (-4.5,2) {\small 2};
	\node at (-4.5,1) {\small 3};
	\node at (-4.5,0) {\small 1};
	\node at (-4.5,-1) {\small 3};
	\node at (-4.5,-2) {\small 2};
	\node at (-4.5,-3) {\small 3};
	\node at (-4.5,-4) {\small 0};
    \end{tikzpicture}
\end{subfigure}
\begin{subfigure}[b]{0.49\textwidth}
    \begin{tikzpicture}[scale=0.7]
	\foreach \x\y in {(0,4), (-3.5,3), (-3,2), (-3.5,1), (-2,0),
	(-3.5,-1), (-3,-2), (-3.5,-3), (0,-4)}{
		\draw [fill=black] \x\y circle [radius=0.1];
	}
	\node[above] at (0,4.1) {$v_0$};
	\node[below] at (0,-4.1) {$v_1$};
	
	\foreach \x\y in {(-3.5,3),(-2.5,3),(-1.5,3),(3.5,3),(-1,2),(3,2),(-0.5,1),(3.5,1),
	(2,0), (-2.5,-1), (2.5,-1), (3,-2), (3.5,-3)}{
		\draw [fill=black] \x\y circle [radius=0.05];
	}
	\draw (-3.5,3) -- (0,4);
	\draw (-3.5,1) -- (-2,0);
	\draw (-3.5,-1) -- (-2,0);
	\draw (-3.5,-3) -- (0,-4);
	
	\draw (-3,2) -- (-2.5,3) -- (0,4);
	\draw (-3,-2) -- (-2.5,-1) -- (-2,0);
	
	\draw (-2,0) -- (-0.5,1) -- (-1,2) -- (-1.5,3) -- (0,4);
	
	\draw (0,-4) -- (3.5,-3) -- (3,-2) -- (2.5,-1) -- (2,0) -- (3.5,1) -- (3,2) -- (3.5,3) -- (0,4);
    \end{tikzpicture}
\end{subfigure}
    \caption{\emph{Left:} Example instance where $G = G_3$ using the construction by Imase and Waxman~\cite{imase91dst}, $\ell=1$, with terminals bolded. All edges have cost $\frac{1}{8}$ so that $\OPT=1$. \emph{Right:} Example solution $\mathcal{T}$ which could be returned by Algorithms~\QoSNU and~\Kruskalnu, with cost $\frac{20}{8}$. Note that in hindsight, $G$ may be sparsified so that $|E| = O(|V|)$, by letting $E = E(\mathcal{T}) \cup E(\mathcal{T}^*)$, then contracting each simple path between two terminals to a single edge with cost equal to the length of the path.}
    \label{fig:kruskal-example-tightness}
\end{figure}

\subsection{Running Time} The running time of Algorithm~\Kruskalnu is similar to that of Algorithm~\QoSNU, namely $|T|$ times the running time of Dijkstra's algorithm. This can be implemented as follows: before line~\ref{line:kruskal-while}, for each terminal $t \in T$, run Dijkstra's algorithm from $t$ using edge weights $c_{P(t)}(\cdot)$, and only keep track of distances from $t$ to terminals with priority $\ge P(t)$. Thus, each terminal $t\in T$ keeps a dictionary of distances from $t$ to a subset of $T$. Then at each iteration (line~\ref{line:kruskal}), find the minimum distance among at most $|T|$ distances. The running time of \Kruskalu is $|T|^2$ times that of Dijkstra's algorithm due to the update step. 




\section{Prim-based MLST algorithm}


A natural approach based on Prim's algorithm is as follows. Choose a root terminal $r$ with $P(r) = \ell$ and remove $r$ from $T$. Then, find a terminal $v \in T$ whose connection cost is minimum, where the connection cost is defined to be the cost of installing or upgrading edges from $r$ to $v$ using rate $P(v)$ (namely, using edge costs $c_{P(v)}(\cdot)$). Remove $v$ from $T$, and decrement costs. Repeat this process of connecting the existing MLST to the closest terminal until $T$ is empty. Interestingly, unlike Algorithm~\Kruskalnu, this approach can return a solution $|T|$ times the optimum, which is rather poor.  We remark that Algorithm~\QoSNU~\cite{Charikar2004ToN} is similar to the Prim-based algorithm, where terminals are connected in order of priority rather than connecting the closest terminals first.

As an example, suppose $G$ is a cycle containing $|V|=\ell+1$ vertices $v_1$, $v_2$, $v_3$, \ldots, $v_{\ell}$, $r$ in that order (Figure~\ref{fig:prim}, left). Let $P(v_i) = i$, and let $P(r) = \ell$. Let $c_i(rv_{\ell}) = 1$ (edge $rv_{\ell}$ has cost 1 regardless of rate), and let $c_i(rv_1) = i(1-\varepsilon)$. Let all other edges have cost zero (or perhaps a small $\varepsilon' \ll \varepsilon$), regardless of rate. Then the Prim-based algorithm greedily connects $v_1$, $v_2$, \ldots, $v_\ell$ in that order, incurring a cost of $1-\varepsilon$ at each iteration. Hence the cost returned is $\ell(1-\varepsilon) \approx |T|$, while $\OPT=1$.


\begin{figure}[h]
    \centering
    \begin{subfigure}[b]{0.3\textwidth}
\centering
\begin{tikzpicture}[scale=0.8]
\draw (-1.4,0) -- (0,0) -- (1,1) -- (1,2.4) -- (0,3.4);
\draw[dashed] (0,3.4) .. controls (-1.4,4) and (-4,1) .. (-1.4,0);
\node[circle,fill=black!40] at (0,0) {\footnotesize $\ell$};
\node[circle,fill=black!10] at (1,1) {\footnotesize 1};
\node[circle,fill=black!10] at (1,2.4) {\footnotesize 2};
\node[circle,fill=black!10] at (0,3.4) {\footnotesize 3};
\node[circle,fill=black!10] at (-1.4,0) {\footnotesize $\ell$};
\node[below] at (-0.7,0) {1};
\node[below,right] at (0.5,0.4) {\footnotesize $i(1-\varepsilon)$};
\end{tikzpicture}
\end{subfigure}
\hfill
\begin{subfigure}[b]{0.3\textwidth}
\centering
\begin{tikzpicture}[scale=0.8]
\draw (-1.4,0) -- (0,0);
\draw[line width=0.5mm] (0,0) -- (1,1) -- (1,2.4) -- (0,3.4);
\draw[dashed, line width=0.5mm] (0,3.4) .. controls (-1.4,4) and (-4,1) .. (-1.4,0);
\node[circle,fill=black!40] at (0,0) {\footnotesize $\ell$};
\node[circle,fill=black!10] at (1,1) {\footnotesize 1};
\node[circle,fill=black!10] at (1,2.4) {\footnotesize 2};
\node[circle,fill=black!10] at (0,3.4) {\footnotesize 3};
\node[circle,fill=black!10] at (-1.4,0) {\footnotesize $\ell$};
\node[below] at (-0.7,0) {1};
\node[below,right] at (0.5,0.4) {\footnotesize $i(1-\varepsilon)$};
\end{tikzpicture}
\end{subfigure}
\hfill
\begin{subfigure}[b]{0.3\textwidth}
\begin{tikzpicture}[scale=0.8]
\draw (0,0) -- (1,1);
\draw[line width=0.5mm] (-1.4,0) -- (0,0);
\draw[line width=0.5mm] (1,1) -- (1,2.4) -- (0,3.4);
\draw[dashed, line width=0.5mm] (0,3.4) .. controls (-1.4,4) and (-4,1) .. (-1.4,0);
\node[circle,fill=black!40] at (0,0) {\footnotesize $\ell$};
\node[circle,fill=black!10] at (1,1) {\footnotesize 1};
\node[circle,fill=black!10] at (1,2.4) {\footnotesize 2};
\node[circle,fill=black!10] at (0,3.4) {\footnotesize 3};
\node[circle,fill=black!10] at (-1.4,0) {\footnotesize $\ell$};
\node[below] at (-0.7,0) {1};
\node[below,right] at (0.5,0.4) {\footnotesize $i(1-\varepsilon)$};
\end{tikzpicture}
\end{subfigure}
    \caption{\emph{Left:} Simple example demonstrating that a Prim-based algorithm can perform poorly. The priorities $P(\cdot)$ and edge costs $c_i(\cdot)$ are shown, and the root $r$ is bolded. \emph{Center:} Solution found by the Prim-based algorithm with cost $\ell(1-\varepsilon)$. \emph{Right:} Optimum solution with cost $\OPT=1$.}
    \label{fig:prim}
\end{figure}
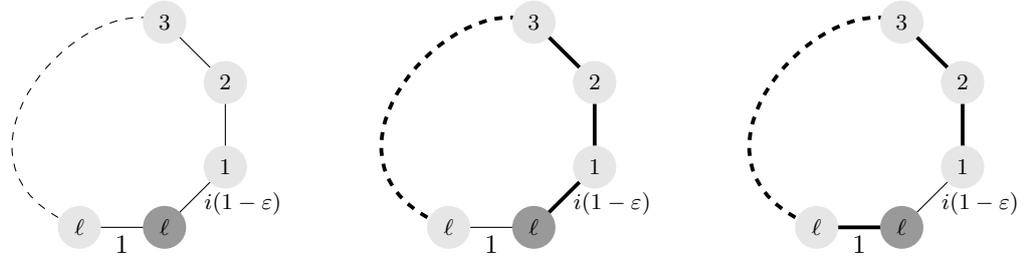

\section{Integer linear programming (ILP) formulation}
\label{section:exact_algorithm}

In~\cite{MLST2018}, ILP formulations were given for the MLST problem with proportional costs. 
We extend these and give an ILP formulation for non-proportional costs. First, direct the graph $G$ by replacing each edge $e=uv$ with two directed edges $(u,v)$ and $(v,u)$. Let $x_{uv}^i = 1$ if $(u,v)$ appears in the solution with rate greater than or equal to $i$, and 0 otherwise. Let $c'_i(u,v)$ denote the incremental cost of edge $(u,v)$ with rate $i$, defined as $c_i(e)-c_{i-1}(e)$ where $e = uv$ and $c_0(e) = 0$. Fix a root $r \in T$ with $P(r) = \ell$. For $i = 1, \ldots, \ell$, let $T_i = \{t \in T: P(t) \ge i\}$ denote the set of terminals requiring priority at least $i$.  For every edge $e=(u,v)$ we define two flow variables $f_{uv}^i$ and $f_{vu}^i$.

\begin{align}
\text{Minimize }& \hspace{3.8ex} \sum \limits_{i=1}^{\ell} \sum \limits_{(u,v) \in E}
c'_i(u,v) x_{uv}^i \text{ subject to} \\
\sum \limits_{(v,w) \in E} f_{vw}^i - \sum \limits_{(u,v) \in E} f_{uv}^i & = 
\begin{cases}
  |T_i|-1 & \quad \text{if } v=r\\
  -1      & \quad \text{if } v \in T_i \setminus \{r\} \\
  0       & \quad \text{else}\\
\end{cases} && \forall \, v \in V; 1 \le i \le \ell \\
x_{uv}^{i} & \le x_{uv}^{i-1} && \forall \, (u,v) \in E; 2 \le i \le \ell \\
0 \leq f_{uv}^i & \leq (|T_i|-1) \cdot x_{uv}^i 
&& \forall \, (u,v) \in E; 1 \le i \le \ell \\
x_{uv}^i &\in \{0, 1\} && \forall \, (u,v) \in E; 1 \le i \le \ell
\end{align}

In the optimal solution, the edges of rate greater than or equal to $i$ form a Steiner tree over $T_i$, so the flow constraint ensures that this property holds. The second constraint ensures that if an edge is selected at rate $i$ or greater, then it must be selected at lower rates. The third constraint ensures that the indicator variable is set equal to one if and only if the corresponding edge is in a tree. The last constraint ensures that the $x_{uv}^i$ variables are 0--1.

\begin{theorem}\label{thm:ilp}
The optimal solution for the ILP induces an MLST with cost $\OPT$.
\end{theorem}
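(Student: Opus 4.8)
The plan is to establish the equality $Z^\ast = \OPT$, where $Z^\ast$ denotes the optimal ILP objective value, via two opposing inequalities, and simultaneously to exhibit that the optimal ILP solution induces a feasible MLST of exactly that cost. The workhorse throughout will be the telescoping identity $\sum_{i=1}^{r} c'_i(e) = \sum_{i=1}^{r}\bigl(c_i(e)-c_{i-1}(e)\bigr) = c_r(e)$, valid since $c_0(e)=0$, which converts incremental ILP costs into MLST edge costs.

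First I would show $Z^\ast \le \OPT$ by building a feasible ILP solution of cost exactly $\OPT$ from an optimal MLST $\mathcal{T}^\ast$ rooted at $r$ with $P(r)=\ell$. I orient each tree edge away from $r$ and set $x_{uv}^i = 1$ precisely when the oriented edge $(u,v)$ has rate $y(e)\ge i$; this assignment is downward-closed in $i$, so the monotonicity constraint $x_{uv}^i \le x_{uv}^{i-1}$ holds. For each level $i$, the edges with $x^i=1$ are exactly those of the subtree $\{e: y(e)\ge i\}$, which connects $r$ to every terminal of $T_i$, so routing one unit of flow from $r$ to each terminal of $T_i\setminus\{r\}$ along this subtree yields a flow satisfying the conservation and capacity constraints. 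By the telescoping identity the objective contributed by each edge $e$ equals $c_{y(e)}(e)$, so the total cost is exactly the cost of $\mathcal{T}^\ast$, namely $\OPT$.

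Next I would show that \emph{any} feasible ILP solution induces a feasible MLST of cost at most its objective value. Given a feasible solution, set $y(e) = \max\{i : x_{uv}^i=1 \text{ or } x_{vu}^i=1\}$ for each undirected edge $e=uv$ (and $y(e)=0$ if unused). The structural crux is that the single-commodity flow certifies connectivity: a feasible flow of value $|T_i|-1$ from $r$ to the sinks $T_i\setminus\{r\}$ can exist only if every terminal of $T_i$ is reachable from $r$ in the subgraph of edges with $x^i=1$, since otherwise a zero-capacity cut would separate $r$ from an unsatisfied sink (the max-flow--min-cut direction). Consequently, for any two terminals $u,v$ with $P(u)\ge P(v)$, both lie in $T_{P(v)}$ and are joined to $r$ by edges of rate $\ge P(v)$, so the $u$--$v$ walk through $r$ uses edges of rate $\ge\min\{P(u),P(v)\}$: exactly the feasibility condition of Definition~\ref{def:mlst}. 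Pruning the lowest-rate edge of each remaining cycle produces a tree and, since every $c'_i(e)\ge 0$, cannot increase the cost; charging each surviving edge of rate $y(e)$ against the objective via the telescoping identity bounds the induced MLST cost by the objective.

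Combining the two directions closes the argument: the first gives $Z^\ast \le \OPT$, while the second applied to the optimal solution yields a feasible MLST of cost $\le Z^\ast$, which by minimality of $\OPT$ is $\ge\OPT$; hence $Z^\ast=\OPT$ and the induced tree attains this cost. I expect the main obstacle to be the connectivity claim in the second direction, namely verifying rigorously that the chosen flow capacities force reachability of all of $T_i$ from $r$ (and conversely that connectivity always admits a feasible flow), together with the bookkeeping needed when both orientations of an edge or redundant cycle edges are selected, which is precisely where non-negativity of the incremental costs $c'_i$ is essential.
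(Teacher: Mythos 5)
Your proof is correct, and it is organized quite differently from the one in the paper's appendix. The paper argues directly about the structure of the optimal ILP solution: it claims the flow variables must be integral, that the support of $x^i$ forms a tree for each priority (since a cycle edge could be dropped to decrease the objective), and that the indicator variables therefore coincide with membership in a nested family of Steiner trees. You instead give the standard two-sided sandwich: an optimal MLST yields a feasible ILP solution of cost $\OPT$ via the telescoping identity $\sum_{i\le r} c'_i(e) = c_r(e)$, and any feasible ILP solution yields a feasible MLST of no greater cost, with connectivity of $r$ to $T_i$ in the support of $x^i$ certified by the cut argument (a feasible flow of the prescribed values cannot cross a zero-capacity cut separating $r$ from a sink). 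Your route buys two things. First, it sidesteps the paper's integrality claim entirely, which is just as well: integrality of the flow is neither needed (a fractional feasible flow certifies connectivity equally) nor guaranteed for an arbitrary optimal ILP solution, since flow can split across parallel paths in the support without changing the objective, so the paper's assertion that ``every flow variable must be equal to an integer'' is dispensable and not literally true in general. Second, your explicit handling of the cost accounting --- both orientations of an edge possibly being selected, cycle pruning, and the role of $c'_i(e)\ge 0$ --- makes precise the inequality ``induced MLST cost $\le$ objective value,'' which the paper leaves implicit. The one point worth spelling out fully in your first direction is that in an optimal MLST rooted at $r$ with $P(r)=\ell$, every terminal $t\in T_i$ is joined to $r$ by edges of rate at least $i$ (take $u=r$ in Definition~\ref{def:mlst}), so the union of these root-to-terminal paths is a connected subgraph of $\{e : y(e)\ge i\}$ along which the unit flows can be routed; you assert this but it deserves the one-line justification.
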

The proof is deferred to Appendix~\ref{apdx:ilp}. Additionally, it can be seen from the formulation that the number of variables is $O(\ell |E|)$ and the number of constraints is $O(\ell (|E|+|V|))$.

\section{Experiments}\label{section:experiment}

We run two primary kinds of experiments: first, we compare the various MLST approximation algorithms discussed here on random graphs from different generators; second, to provide comparison with the Steiner tree literature, we perform experiments on instances generated using the SteinLib library~\cite{KMV00}.  In both cases, we consider natural questions about how the number of priorities, number of vertices, and decay rate of terminals with respect to priorities affect the running times and (experimental) approximation ratios (cost of returned solution divided by $\OPT$) of the algorithms explored here.  We also record how often the algorithms proposed here provide better approximation ratios than pre-existing algorithms.  Moreover, we illustrate a class of graphs for which Algorithm \Kruskalu always performs better than Algorithm \QoSNU.

\subsection{Experiment Parameters}

We run experiments first to test runtime vs. parameters discussed above, and then to test the experimental approximation ratio vs. the parameters.  Each set of experiments has several parameters: the graph generator (random generators or SteinLib instances), the maximum number of priorities $\ell$, $|V|$, how the size of the terminal sets $T_i$ (terminals requiring priority at least $i$) decrease as $i$ decreases, and proportional vs. non-proportional edge costs.

In what follows, we use the Erd\H{o}s--R\'{e}nyi (ER)~\cite{erdos1959random}, Watts--Strogatz (WS)~\cite{watts1998collective}, and Barab\'{a}si--Albert (BA)~\cite{barabasi1999emergence} models or SteinLib instances~\cite{KMV00} to generate the input graph (more on how SteinLib instances are given priorities later).  We consider number of priorities $\ell\in\{2,\dots,7\}$, and adopt two methods for selecting terminal sets (equivalently priorities): \emph{linear} and \emph{exponential}.  A terminal set $T_\ell$ with lowest priority of size $n(1-\frac{1}{\ell+1})$ in the linear case and $\frac{n}{2}$ in the exponential case is chosen uniformly at random.  For each subsequent priority, $\frac{1}{\ell+1}$ terminals are deleted at random in the linear case, whereas half the remaining terminals are deleted in the exponential case.  Priorities and terminal sets are related via $T_i = \{t \in T: P(t) \ge i\}$. For the proportional edge weight case, we choose $c_1(e)$ uniformly at random from $\{1,\dots,10\}$ for each edge independently and set $c_i(e)=ic_1(e)$ for $i=1,\dots,\ell$.  For the non-proportional setting, we select the incremental edge costs $c_1(e)$, $c_2(e) - c_1(e)$, $c_3(e) - c_2(e)$, \ldots, $c_\ell(e) - c_{\ell-1}(e)$ uniformly at random from $\{1,2,3,\ldots,10\}$ for each edge independently.

In the case that the input graph comes from SteinLib, it has a prescribed terminal set (since SteinLib graphs are instances of ST problem for a single priority).  For these inputs, priorities are generated in two ways: filtered terminals and augmented terminals. To generate filtered terminals we divide the set of original terminals from the SteinLib into $\ell$ sets (with $\ell\in\{2,\dots,6\}$). We assign the first set as the topmost priority terminals. We assign the second set to the next priority and so on. For the augmented case, we start with the initial terminals from the SteinLib instance and add additional terminals uniformly at random from the remaining vertices.  We assign $5$ vertices as top priority terminals, double the number of terminals in the next priority, and so on until the maximum number of terminals is reached (we assign $\ell\in\{2,3,4\}$ priorities). Augmentation makes sense given that some of the original SteinLib instances have very few terminals. We have generated our datasets from two subsets of SteinLib: \href{http://steinlib.zib.de/showset.php?I080}{I080} and \href{http://steinlib.zib.de/showset.php?I160}{I160}; we generate both types of terminals (filtered and augmented) for each of these datasets.

An experimental instance of the MLST problem here is thus characterized by five parameters: graph generator, number of vertices $|V|$, number of priorities $\ell$, terminal selection method $\textsc{TSM}\in\{\textsc{Linear,Exponential}\}$, and proportionality of the edge weights $\textsc{TE}\in\{\textsc{Prop,Non-prop}\}$.  As there is randomness involved, we generated five instances for every choice of parameters (e.g., ER, $|V| = 70$, $\ell=4$, \textsc{Linear}, \textsc{Non-prop}). 

For the following experiments, we implement the \Kruskalu and \QoSU algorithms in the proportional case, and the \Kruskalu and \QoSNU algorithms in the non-proportional case. We note here that Algorithm \Kruskalnu achieves much poorer results with respect to OPT than \Kruskalu in practice despite having similar theoretical guarantees.  To compute the approximation ratios, we use the ILP described in Section \ref{section:exact_algorithm} using CPLEX 12.6.2 as an ILP solver.

\subsection{Results}

As one would expect, runtime for both the ILP and all approximation algorithms increased as $|V|$ or $\ell$ increased.  Runtime was typically higher for linear terminal selection than for exponential.  See Figures \ref{BoxPlots_ER_uniform_time_heu}--\ref{BoxPlots_BA_uniform_time_heu} in the Appendix for detailed plots. We do note that the running times of the approximation algorithms are significantly faster than the running time of the ILP; the latter takes a couple of minutes for whereas the approximation algorithms take only a couple of seconds for the same instances generated in our experiments.

There was no discernible trend in plots of Ratio (defined as cost/\OPT) vs. $|V|$, $\ell$, or the terminal selection method (linear or exponential).  In all cases, for all graph generators, both the \Kruskalu and \QoSU (or \QoSNU in the non-proportional case) exhibited similar statistical behavior independent of the given parameter (see Figures \ref{BoxPlots_WS_uniform_cmp}--\ref{BoxPlots_BA_nonuniform} in the Appendix for detailed plots).  For a brief illustration, we show the behavior for Erd\H{o}s--R\'{e}nyi graphs with $p=(1+\varepsilon)\frac{\ln n}{n}$ in Figure \ref{BoxPlots_ER_uniform_cmp}, and include the performance of the Composite Algorithm of \cite{MLST2018} (CMP) as it gives the best \textit{a priori} approximation ratio guarantee. 

\begin{figure}[ht]
\begin{minipage}{\textwidth}
    \centering
    \begin{subfigure}[b]{0.30\textwidth}
        \includegraphics[width=\textwidth]{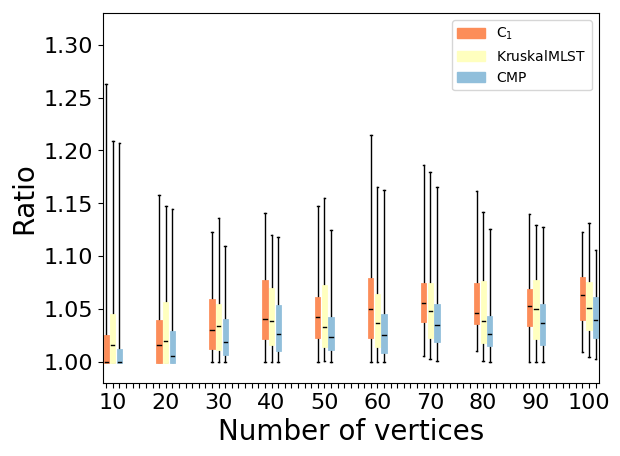}
    \end{subfigure}
    ~ 
    \begin{subfigure}[b]{0.30\textwidth}
        \includegraphics[width=\textwidth]{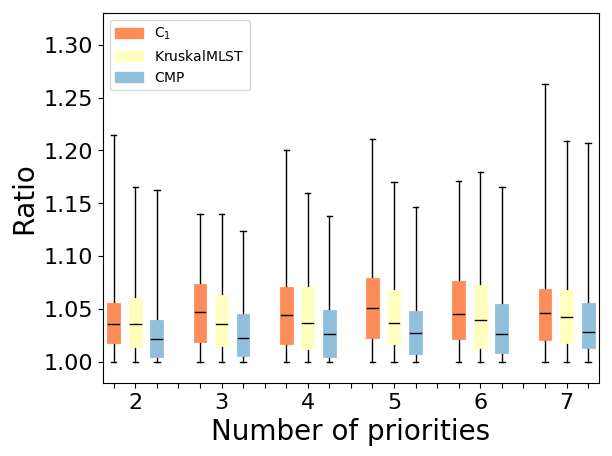}
    \end{subfigure}
    ~
    \begin{subfigure}[b]{0.30\textwidth}
        \includegraphics[width=\textwidth]{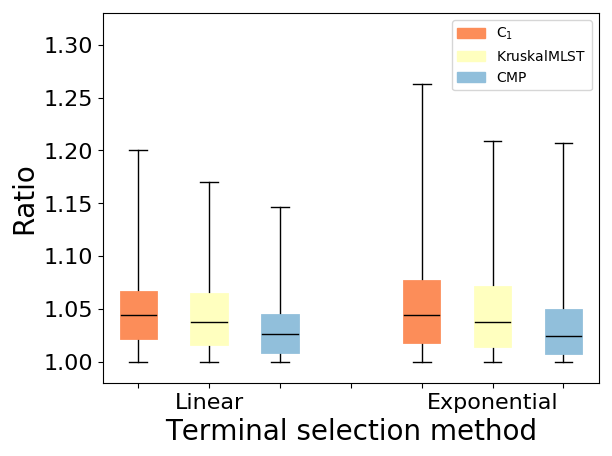}
    \end{subfigure}
    \caption{Performance of \QoSU~\cite{Charikar2004ToN}, \Kruskalu, and \CMP~\cite{MLST2018} on Erd\H{o}s--R{\'e}nyi graphs w.r.t.\
      $|V|$, $\ell$, and terminal selection method with proportional edge weights.}
    \label{BoxPlots_ER_uniform_cmp}
\end{minipage}
\end{figure}

From this figure, we see that on average \Kruskalu outperforms \QoSNU.  However, it is instructive to compare the instance-wise performance of the different algorithms.  Tables \ref{table:proportionalKeaton} and \ref{table:non_proportionalKeaton} show comparisons of the statistical performance of the the two approximation algorithms for various graph generators in the proportional and non-proportional case, respectively. For each graph generator, there are a total of 1140 instances consisting of 5 graphs for each set of parameters ($|V|$, $\ell$, etc.).

\begin{table}[h]
\resizebox{\columnwidth}{!}{
\begin{tabular}{|c||c|c||c|c||c|c||c|c|}
\hline
Graph Generator & \multicolumn{2}{c||}{ER} & \multicolumn{2}{c||}{WS} & \multicolumn{2}{c||}{BA} & \multicolumn{2}{c|}{SteinLib}\\ \hline
Algorithm & \QoSU & K & \QoSU & K & \QoSU & K & \QoSU & K \\ \hline
Equal to \OPT & 73 & \textbf{133} & 391 & \textbf{679} & 94 & \textbf{202} & 4 & \textbf{8} \\
\hline
Mean & 1.048 & \textbf{1.044} & 1.016 & \textbf{1.012} & 1.028 & \textbf{1.021} & 1.2355 & \textbf{1.1918} \\
\hline
Median & 1.044 & \textbf{1.037} & 1.006 & \textbf{1.0} & 1.019 & \textbf{1.016} & 1.2072 & \textbf{1.1707} \\
\hline
Min & \textbf{1.0} & \textbf{1.0} & \textbf{1.0} & \textbf{1.0} & \textbf{1.0} & \textbf{1.0} & \textbf{1.0} & \textbf{1.0} \\
\hline
Max & 1.263 & \textbf{1.202} & 1.31 & \textbf{1.18} & 1.212 & \textbf{1.126} & 1.7488 & \textbf{1.6404} \\
\hline
Best Approx. & 40.53\% & \textbf{54.29\% }& 24.92\% & \textbf{50.78\%} & 30.62\% & \textbf{69.38\%} & 31.50 & \textbf{59.12\%} \\
\hline
\end{tabular}}\caption{Statistics of Algorithms \QoSU~\cite{Charikar2004ToN} and \Kruskalu (abbreviated K) with proportional edge cost. Best Approx.~reports the percentage of instances (out of 1140) that each algorithm achieved strictly better experimental approximation ratio.  Best performance in each category is bolded.}
\label{table:proportionalKeaton}
\end{table}

\begin{table}[h]
\centering
\begin{tabular}{|c||c|c||c|c||c|c|}
\hline
Graph Generator & \multicolumn{2}{c||}{ER} & \multicolumn{2}{c||}{WS} & \multicolumn{2}{c|}{BA}\\ \hline
Algorithm & \QoSNU & K & \QoSNU & K & \QoSNU & K \\ \hline
Equal to \OPT & 16 & \textbf{26} & 16 & \textbf{30} & 10 & \textbf{26} \\
\hline
Mean & 1.123 & \textbf{1.109} & 1.099 & \textbf{1.081} & 1.121 & \textbf{1.097} \\
\hline
Median & 1.109 & \textbf{1.099} & 1.087 & \textbf{1.067} & 1.096 & \textbf{1.08} \\
\hline
Min & \textbf{1.0} & \textbf{1.0} & \textbf{1.0} & \textbf{1.0} & \textbf{1.0} & \textbf{1.0} \\
\hline
Max & 1.667 & \textbf{1.54} & 1.863 & \textbf{1.601} & 1.941 & \textbf{1.667} \\
\hline
Best Approx. & 37.20\% & \textbf{61.22\%} & 34.83\% & \textbf{63.85\%} & 30.62\% & \textbf{68.24\%} \\
\hline
\end{tabular}\caption{Statistics of Algorithms \QoSNU~\cite{Charikar2004ToN} and \Kruskalu (abbreviated K) with non-proportional edge cost. Best Approx.~reports the percentage of instances (out of 1140) that each algorithm achieved strictly better experimental approximation ratio. Best performance in each category is bolded.}
\label{table:non_proportionalKeaton}
\end{table}

We see from these tables that \Kruskalu consistently outperforms the algorithms of \cite{Charikar2004ToN} in each of the statistical categories, and also achieves better instance-wise results a majority of the time, although this behavior depends somewhat on the graph generator.  A full suite of figures is given in the Appendix to further illustrate the performance of each algorithm for the various generators. The trends are essentially the same and are as follows. \Kruskalu outperforms \QoSNU on a majority of instances, but has marginally longer runtime (though the difference is not appreciable); the number of priorities has little effect on runtime or experimental approximation ratio; the number of vertices increases the runtime for some generators, but has little effect on the experimental approximation ratio; experimental approximation ratios are typically better on average for exponentially decreasing terminal sets (which makes sense given that $|T|$ is smaller and the approximation guarantees are $O(\ln|T|)$).  Finally, we note that the Composite algorithm of \cite{MLST2018} can achieve better approximation in the proportional edge cost setting, but is not known to work for the non-proportional setting; additionally Composite suffers from exponential growth in runtime with respect to $\ell$, which is a feature not exhibited by \Kruskalu.

\subsection{Graphs for which \Kruskalu always outperforms \QoSNU}

Here we generate a special class of graphs for which the Kruskal-based algorithm always provides near-optimal solutions, but Algorithm \QoSNU performs poorly. This class of graphs consists of cycles with randomly added edges. Begin with a cycle $v_1, v_2, \cdots, v_n, v_1$ and set the weight of edge $v_1v_n$ be $w - \epsilon$ where length of the path $v_1, v_2, \cdots, v_n$ is $w$. We select $v_1$ and $v_n$ as higher-priority terminals, and the remaining vertices as lower-priority terminals. An algorithm that works in a top-down manner will take the edge $v_1v_n$ for higher priority and pay significantly more than the optimal solution~\cite{MLST2018}. Doing this to every edge $(v_i, v_{i+1})$ results an MLST instance where a top-down approach performs arbitrarily poorly. On these graphs, the algorithm provided in Charikar et al.~\cite{Charikar2004ToN} for proportional instances of MLST 
performs noticeably worse than our Kruskal-based approach (see  Figure~\ref{qos_worst}). 
We generated 500 graphs of this type (augmented with some additional edges at random). The script to generate these graphs are available on Github at \url{ https://github.com/abureyanahmed/Kruskal_based_approximation}. 


\begin{figure}[ht]
\centering
    \includegraphics[width=.40\textwidth]{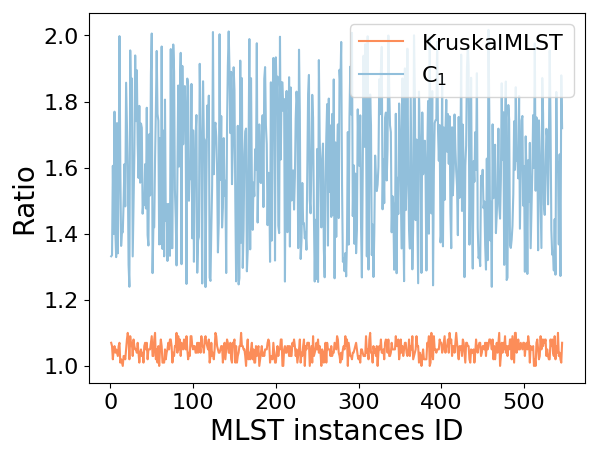}
    \caption{A class of graphs for which the Algorithm \Kruskalu significantly outperforms Algorithm \QoSNU~\cite{Charikar2004ToN}. The $x$--axis is the instance number and carries no meaning of time; the $y$--axis is the approximation ratio.}
    \label{qos_worst}
\end{figure}

\section{Conclusion}
We proposed two algorithms for the MLST problem based on Kruskal's and Prim's algorithms. We showed that the Kruskal-based algorithm is a logarithmic approximation, matching the best approximation guarantee of Charikar et al.~\cite{Charikar2004ToN}, while the Prim-based algorithm can perform arbitrarily poorly. We formulated an ILP for the general MLST problem and provided an experimental comparison between the algorithm provided by Charikar et al.~\cite{Charikar2004ToN}, Ahmed et al.~\cite{MLST2018}, and the Kruskal-based algorithm, \Kruskalu.  We demonstrated that \Kruskalu compares favorably to other algorithms in terms of experimental approximation ratio for both the proportional and non-proportional edge costs while incurring a minor cost in run time.  Finally, we generated a special class of graphs for which~\Kruskalu always performs significantly better than that by Charikar et al.~\cite{Charikar2004ToN}. A natural question is whether the analysis of any of these algorithms~\Kruskalnu, ~\Kruskalu, or~\QoSNU can be tightened, improving the approximability gap between  $O(\log \log n)$ and $O(\log n)$ for the MLST problem with non-proportional edge costs.


\bibliography{mlst}

\newpage
\appendix

\section{Proof of Theorem~\ref{thm:ilp}} \label{apdx:ilp}
\begin{proof}
We first show that the flow variables take only integer values from zero to $|T_i|-1$ although it is not specifically mentioned in the formulation. Note that for every priority the ILP generates a connected component in order to fulfill the conditions of the second equation. The algorithm will compute a tree for every priority, otherwise, there is a cycle at a tree of a particular priority and removing an edge from the cycle minimizes the objective. According to the second equation, the flow variable corresponding to an incoming edge connected to a terminal that is not root is equal to one if the edge is in the tree. Since the difference between the incoming and outgoing flow is $|T_i|-1$ for the root and zero for any intermediate node, every flow variable must be equal to an integer. Also if we do not have integer flows (for example the incoming flow is one and there are two outgoing flows with values 1/2), then because of the conditions in second equation cycles will be generated. Because of this property, the fourth equation ensures that $x_{uv}^i$ is equal to one iff the corresponding flow variable has a value greater than or equal to one. In other words, an indicator variable is equal to one iff the corresponding edge is in the tree. Note that, the formulation has only one assumption on the edge weights: the cost of an edge for a particular rate is greater than or equal to the weight of the edge having lower rates. Hence, the formulation computes the optimal solution for (non-)proportional instances.
\end{proof}

\section{Additional Experimental Results} \label{apdx:experiments}
In this section, we provide some details of the experiments discussed in Section~\ref{section:experiment}.

\subsection{Graph Generator Parameters}
Given a number of vertices, $n$, and probability $p$, the model $\textsc{ER}(n,p)$ assigns an edge to any given pair of vertices with probability $p$.
An instance of $\textsc{ER}(n,p)$ with $p=(1+\varepsilon)\frac{\ln n}{n}$ is connected with high probability for $\varepsilon>0$~\cite{erdos1959random}.  For our experiments we use $n \in \{10, 15, 20, \cdots, 100\}$, and $\varepsilon = 1$.

The Watts--Strogatz model~\cite{watts1998collective} is used to generate graphs that have the small-world property and high clustering coefficient. The model, denoted by $\textsc{WS}(n,K,\beta)$, initially creates a ring lattice of constant degree $K$, and then rewires each edge with probability $0\leq \beta \leq 1$ while avoiding self-loops or duplicate edges. In our experiments, the values of $K$ and $\beta$ are set to $6$ and $0.2$ respectively.

The Barab{\'a}si--Albert model generates networks with
power-law degree distribution, i.e., few vertices become hubs with
extremely large degree~\cite{barabasi1999emergence}. The model is denoted by $\textsc{BA}(m_0,m)$, and uses a preferential
attachment mechanism to generate a growing scale-free network. The
model starts with a graph on $m_0$ vertices. Then, each new vertex
connects to $m\leq m_0$ existing nodes with probability proportional
to its instantaneous degree. This model is a
network growth model. In our experiments, we let the network grow
until the desired network size $n$ is attained. We vary $m_0$ from $10$
to $100$ in our experiments, and set $m=5$.

\subsection{Computing Environment}

For computing the optimum solution, we implemented the ILP described in Section~\ref{section:exact_algorithm} using CPLEX 12.6.2 as an ILP solver.  The model of the HPC system we used for our experiment is Lenovo NeXtScale nx360 M5. It is a distributed system; the models of the processors in this HPC are Xeon Haswell E5-2695 Dual 14-core and Xeon Broadwell E5-2695 Dual 14-core. The speed of a processor is 2.3 GHz. There are 400 nodes each having 28 cores. Each node has 192 GB memory. The operating system is CentOS 6.10.

\subsection{Experimental Setup}

We have considered proportional and non-proportional instances separately. The Kruskal-based algorithm is the same in both settings, but the algorithms of \cite{Charikar2004ToN} admit 2 variants: \QoSU for proportional edge costs which is a $4\rho$--approximation, and \QoSNU for non-proportional edge costs which is a $2(\ln|T|+1)$--approximation.  In figures below, Ratio stands for the approximation ratio given by the cost of the solution returned by the approximation algorithm divided by the optimum cost \OPT returned by the ILP.

All box plots shown below show the minimum, interquartile range (IQR) and maximum, aggregated over all instances using the parameter being compared.

\subsection{Approximation Ratio vs. Parameters -- Proportional edge costs}

First, we take a look at how the approximation ratio of the approximation algorithms is affected by the parameters chosen.  Figures \ref{BoxPlots_ER_uniform_cmp}, \ref{BoxPlots_WS_uniform_cmp}, and \ref{BoxPlots_BA_uniform_cmp} illustrate the change in approximation for different parameters ($|V|$, $\ell$, and the terminal selection method) in the case of proportional edge costs.  For comparison to \cite{MLST2018}, we include the performance of the Composite algorithm (CMP) described therein.

\begin{figure}[h!]
\begin{minipage}{\textwidth}
    \centering
    \begin{subfigure}[b]{0.30\textwidth}
        \includegraphics[width=\textwidth]{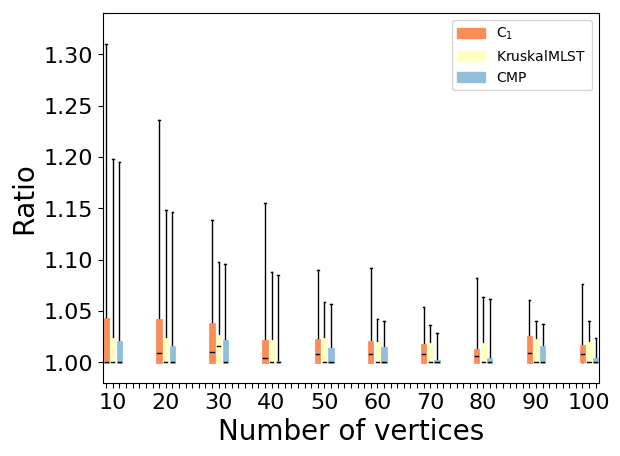}
    \end{subfigure}
    ~ 
    \begin{subfigure}[b]{0.30\textwidth}
        \includegraphics[width=\textwidth]{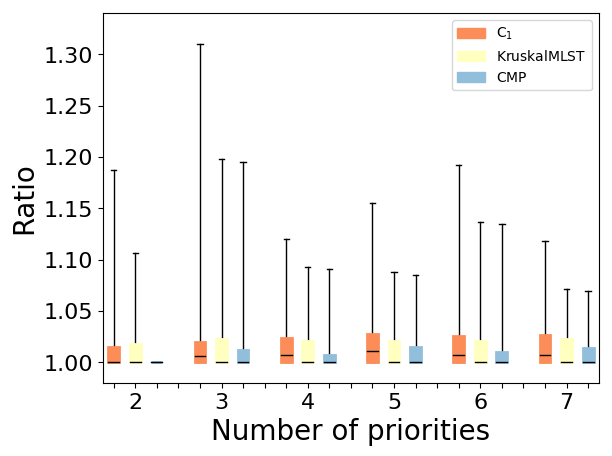}
    \end{subfigure}
    ~
    \begin{subfigure}[b]{0.30\textwidth}
        \includegraphics[width=\textwidth]{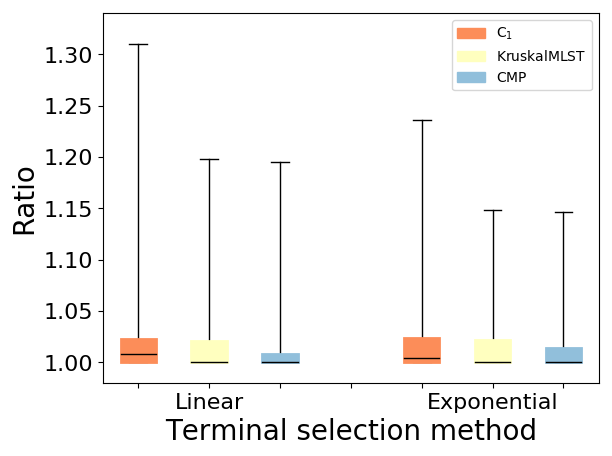}
    \end{subfigure}
    \caption{Performance of \QoSU~\cite{Charikar2004ToN}, \Kruskalu, \CMP~\cite{MLST2018} on Watts--Strogatz graphs w.r.t.\
      $|V|$, $\ell$, and terminal selection method with proportional edge weights.}
    \label{BoxPlots_WS_uniform_cmp}
\end{minipage}
\end{figure}

\begin{figure}[h!]
\begin{minipage}{\textwidth}
    \centering
    \begin{subfigure}[b]{0.30\textwidth}
        \includegraphics[width=\textwidth]{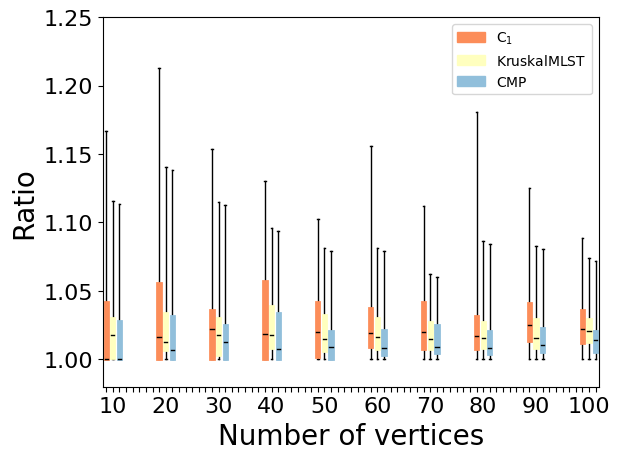}
    \end{subfigure}
    ~ 
    \begin{subfigure}[b]{0.30\textwidth}
        \includegraphics[width=\textwidth]{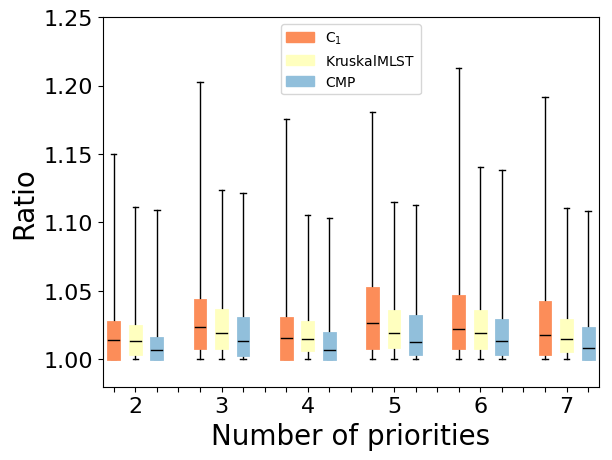}
    \end{subfigure}
    ~
    \begin{subfigure}[b]{0.30\textwidth}
        \includegraphics[width=\textwidth]{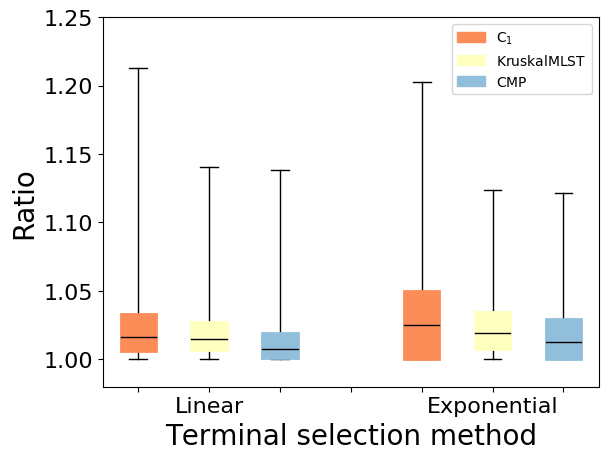}
    \end{subfigure}
    \caption{Performance of \QoSU~\cite{Charikar2004ToN}, \Kruskalu, and \CMP~\cite{MLST2018} on Barab{\'a}si--Albert graphs w.r.t.\
      $|V|$, $\ell$, and terminal selection method with proportional edge weights.}
    \label{BoxPlots_BA_uniform_cmp}
\end{minipage}
\end{figure}

We see that for Erd\H{o}s--R\'{e}nyi graphs, the number of vertices marginally increases the approximation ratio over time, while for the other generators this does not appear to be the case. Overall, no discernible trend occurs for the number of priorities regardless of the generator.  Interestingly, for randomly generated graphs, there appears to be no relation to the rate of decrease of terminal sets (i.e., linear vs. exponential) with the statistics of the approximation ratios.

\subsection{Approximation Ratio vs. Parameters -- Non-Proportional Edge Costs}

Here we consider the case non-proportional edge cost, in which we compare Algorithms \QoSNU and \Kruskalu.  The Composite algorithm of \cite{MLST2018} was not designed for non-proportional edge costs and so is not included here. Figures \ref{BoxPlots_ER_nonuniform}--\ref{BoxPlots_BA_nonuniform} show the approximation ratios vs. parameters for each of the random graph generators discussed above.

\begin{figure}[h!]
\begin{minipage}{\textwidth}
    \centering
    \begin{subfigure}[b]{0.30\textwidth}
        \includegraphics[width=\textwidth]{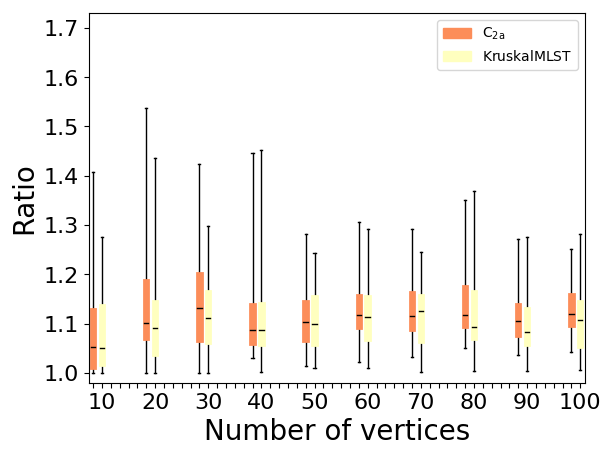}
    \end{subfigure}
    ~ 
    \begin{subfigure}[b]{0.30\textwidth}
        \includegraphics[width=\textwidth]{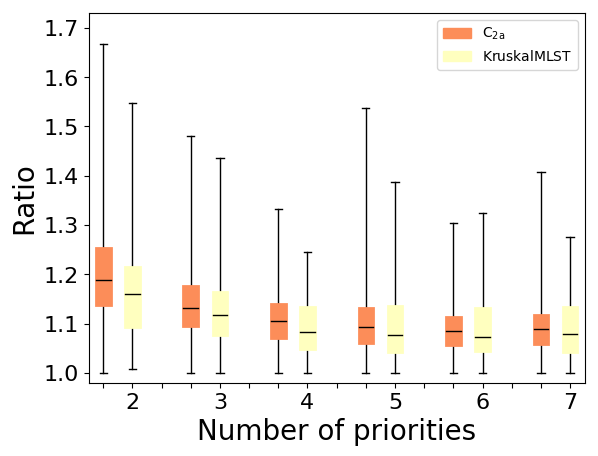}
    \end{subfigure}
    ~
    \begin{subfigure}[b]{0.30\textwidth}
        \includegraphics[width=\textwidth]{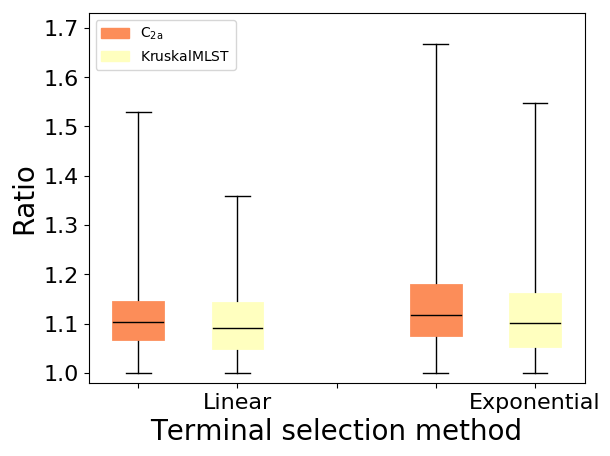}
    \end{subfigure}
    \caption{Performance of \QoSNU~\cite{Charikar2004ToN} and \Kruskalu w.r.t.\
      $|V|$, $\ell$, and terminal selection method with non-proportional edge weights on Erd\H{o}s--R{\'e}nyi graphs.}
    \label{BoxPlots_ER_nonuniform}
\end{minipage}
\end{figure}

\begin{figure}[h!]
\begin{minipage}{\textwidth}
    \centering
    \begin{subfigure}[b]{0.30\textwidth}
        \includegraphics[width=\textwidth]{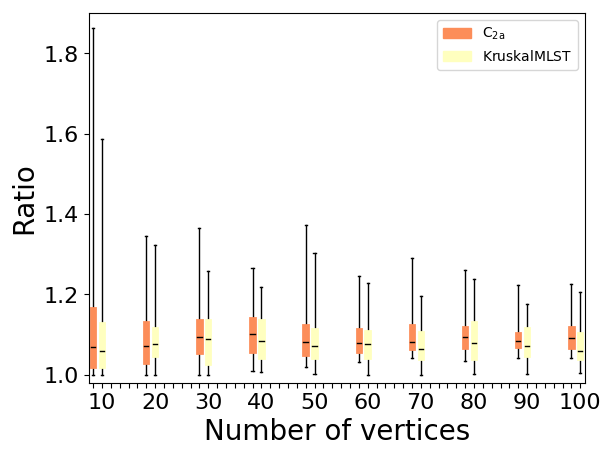}
    \end{subfigure}
    ~ 
    \begin{subfigure}[b]{0.30\textwidth}
        \includegraphics[width=\textwidth]{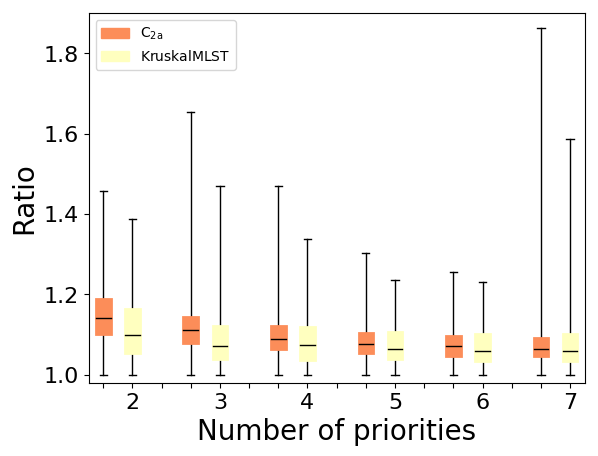}
    \end{subfigure}
    ~
    \begin{subfigure}[b]{0.30\textwidth}
        \includegraphics[width=\textwidth]{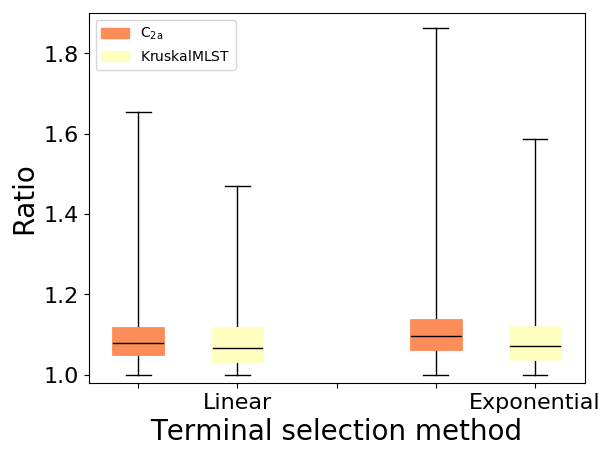}
    \end{subfigure}
    \caption{Performance of \QoSNU~\cite{Charikar2004ToN} and \Kruskalu w.r.t.\
      $|V|$, $\ell$, and terminal selection method with non-proportional edge weights on Watts--Strogatz graphs.}
    \label{BoxPlots_WS_nonuniform}
\end{minipage}
\end{figure}

\begin{figure}[h!]
\begin{minipage}{\textwidth}
    \centering
    \begin{subfigure}[b]{0.30\textwidth}
        \includegraphics[width=\textwidth]{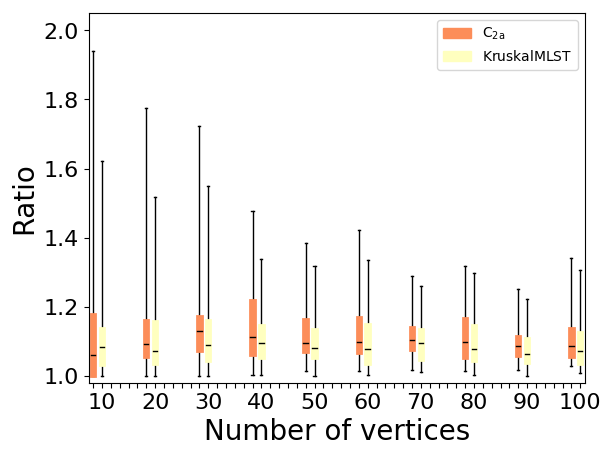}
    \end{subfigure}
    ~ 
    \begin{subfigure}[b]{0.30\textwidth}
        \includegraphics[width=\textwidth]{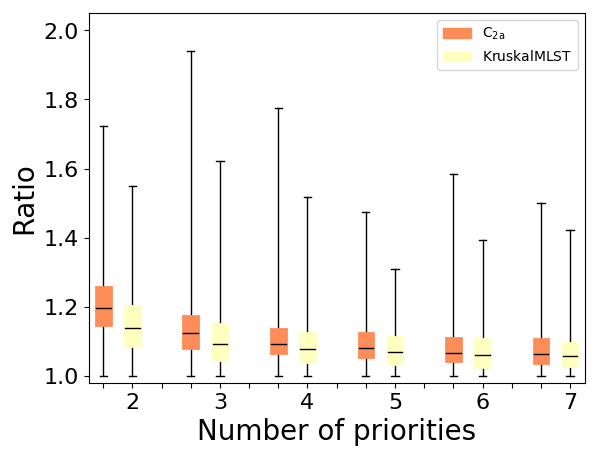}
    \end{subfigure}
    ~
    \begin{subfigure}[b]{0.30\textwidth}
        \includegraphics[width=\textwidth]{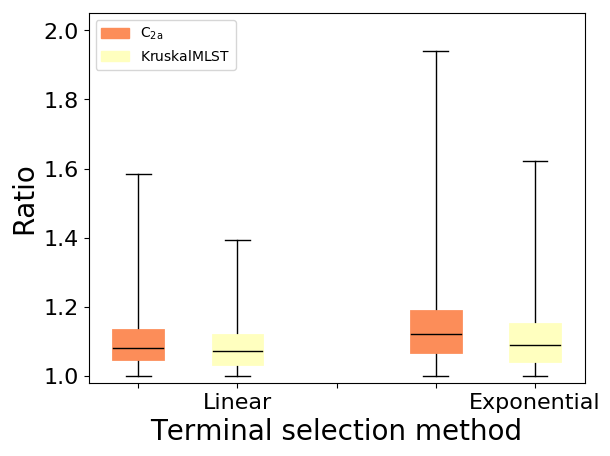}
    \end{subfigure}
    \caption{Performance of \QoSNU~\cite{Charikar2004ToN} and \Kruskalu w.r.t.\
      $|V|$, $\ell$ and terminal selection method with non-proportional edge weights on Barab{\'a}si--Albert graphs.}
    \label{BoxPlots_BA_nonuniform}
\end{minipage}
\end{figure}

In the non-proportional case, it is interesting that the approximation ratio appears to be little affected by any of the parameters, and even appears to decrease with respect to the number of priorities. It is unclear if this trend would continue for large number of priorities, but it is an interesting one nonetheless.  Of additional note is that \Kruskalu typically has less variance in its approximation ratio than the algorithms of Charikar et al.~\cite{Charikar2004ToN} in both the proportional and non-proportional case.

\subsection{Approximation Ratio vs. Parameters -- SteinLib Instances}

For the experiments on the SteinLib graphs \cite{KMV00}, we first extended two datasets (\href{http://steinlib.zib.de/showset.php?I080}{I080} and \href{http://steinlib.zib.de/showset.php?I160}{I160}) to have priorities via filtering or augmenting as described in Section \ref{section:experiment}. 
We provide the plots showing the Performance of \QoSU~\cite{Charikar2004ToN}, \Kruskalu, and \CMP~\cite{MLST2018} on \href{http://steinlib.zib.de/showset.php?I080}{I080} and \href{http://steinlib.zib.de/showset.php?I160}{I160} graphs w.r.t. $\ell$ with filtered priorities in Figure~\ref{BoxPlots_steinlib_filtered_cmp}, and for augmented priorities in Figure~\ref{BoxPlots_steinlib_augmented_cmp}. 

\begin{figure}[h!]
\begin{minipage}{\textwidth}
    \centering
    \begin{subfigure}[b]{0.30\textwidth}
        \includegraphics[width=\textwidth]{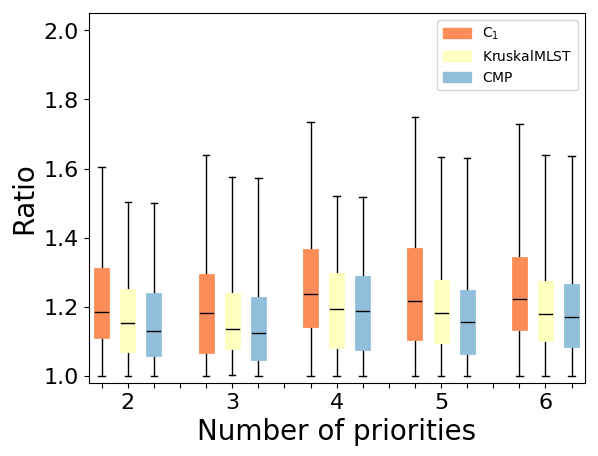}
    \end{subfigure}
    ~ 
    \begin{subfigure}[b]{0.30\textwidth}
        \includegraphics[width=\textwidth]{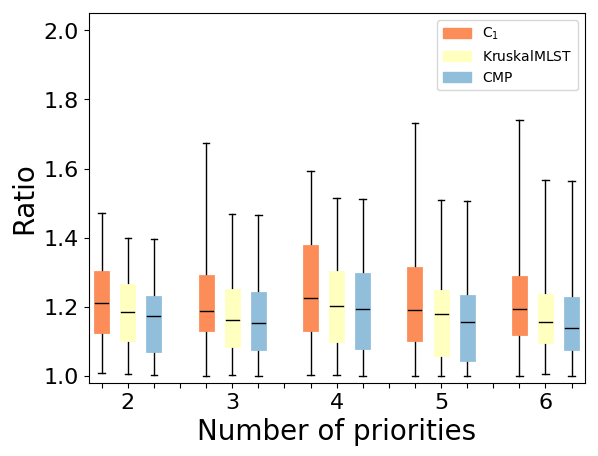}
    \end{subfigure}
    \caption{Performance of \QoSU~\cite{Charikar2004ToN}, \Kruskalu, and \CMP~\cite{MLST2018} on \href{http://steinlib.zib.de/showset.php?I080}{I080} and \href{http://steinlib.zib.de/showset.php?I160}{I160} graphs w.r.t.\
      $\ell$ with filtered priorities.}
    \label{BoxPlots_steinlib_filtered_cmp}
\end{minipage}
\end{figure}

\begin{figure}[h!]
\begin{minipage}{\textwidth}
    \centering
    \begin{subfigure}[b]{0.30\textwidth}
        \includegraphics[width=\textwidth]{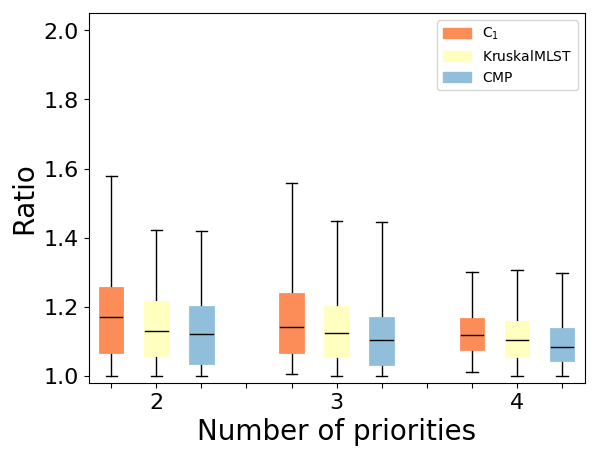}
    \end{subfigure}
    ~ 
    \begin{subfigure}[b]{0.30\textwidth}
        \includegraphics[width=\textwidth]{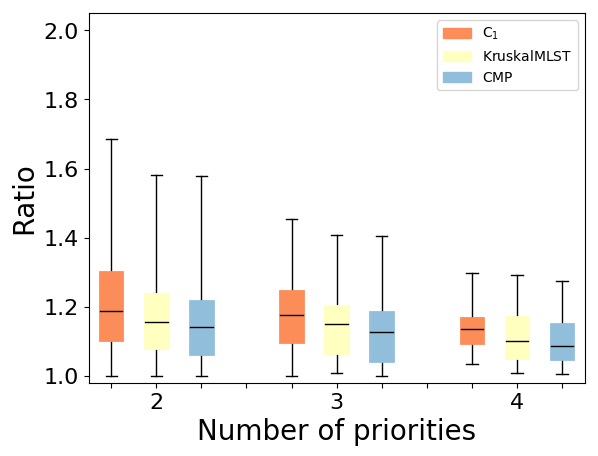}
    \end{subfigure}
    \caption{Performance of \QoSU~\cite{Charikar2004ToN}, \Kruskalu, and \CMP~\cite{MLST2018} on \href{http://steinlib.zib.de/showset.php?I080}{I080} and \href{http://steinlib.zib.de/showset.php?I160}{I160} graphs w.r.t.\
      $\ell$ with augmented priorities.}
    \label{BoxPlots_steinlib_augmented_cmp}
\end{minipage}
\end{figure}

\subsection{Runtime vs. Parameters -- Proportional Edge Costs}

Now we take a look at the affect of the parameters mentioned above on the average runtimes of the approximation algorithms in the case of proportional edge costs.  Figures \ref{BoxPlots_ER_uniform_time_heu}--\ref{BoxPlots_BA_uniform_time_heu} show the runtime of the algorithms \QoSNU, \Kruskalu, and \Kruskalnu versus $|V|$, $\ell$, and the terminal selection method.

\begin{figure}[h!]
\begin{minipage}{\textwidth}
    \centering
    \begin{subfigure}[b]{0.30\textwidth}
        \includegraphics[width=\textwidth]{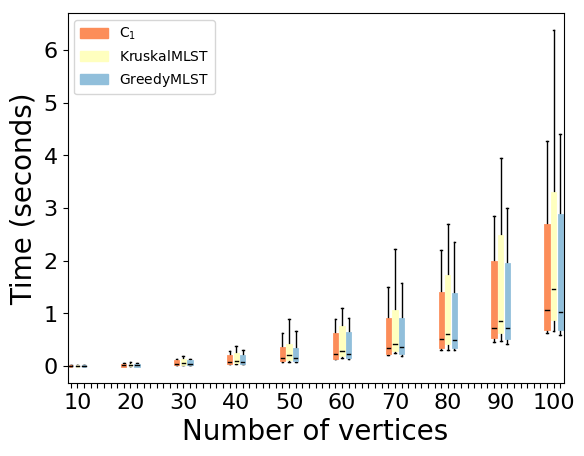}
    \end{subfigure}
    ~ 
    \begin{subfigure}[b]{0.30\textwidth}
        \includegraphics[width=\textwidth]{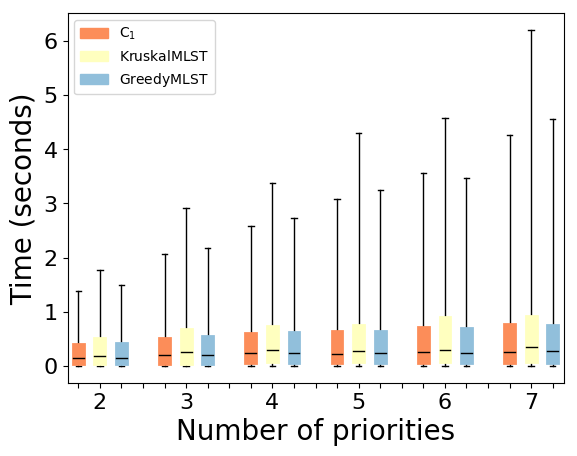}
    \end{subfigure}
    ~
    \begin{subfigure}[b]{0.30\textwidth}
        \includegraphics[width=\textwidth]{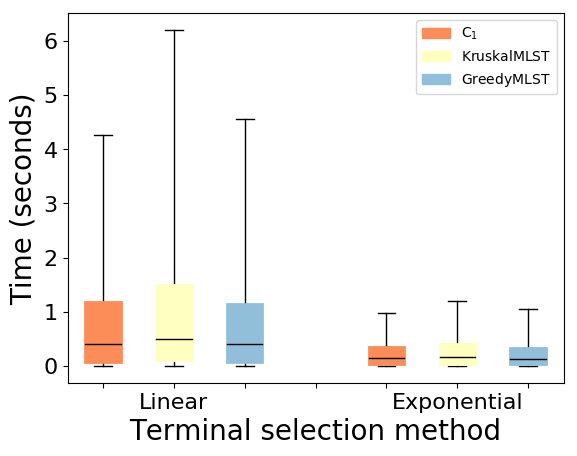}
    \end{subfigure}
    \caption{Experimental running times for computing approximation algorithm solutions w.r.t.\
      $|V|$, $\ell$, and terminal selection method with proportional edge weights on Erd\H{o}s--R{\'e}nyi graphs.}
    \label{BoxPlots_ER_uniform_time_heu}
\end{minipage}
\end{figure}

\begin{figure}[h!]
\begin{minipage}{\textwidth}
    \centering
    \begin{subfigure}[b]{0.30\textwidth}
        \includegraphics[width=\textwidth]{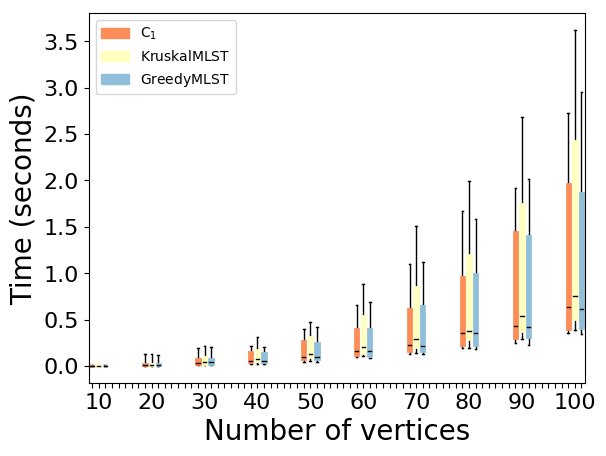}
    \end{subfigure}
    ~ 
    \begin{subfigure}[b]{0.30\textwidth}
        \includegraphics[width=\textwidth]{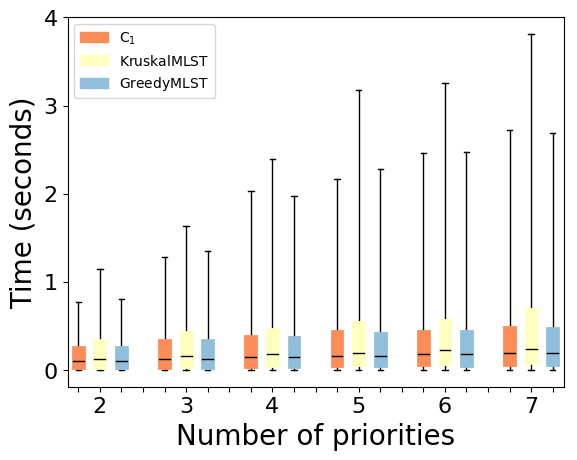}
    \end{subfigure}
    ~
    \begin{subfigure}[b]{0.30\textwidth}
        \includegraphics[width=\textwidth]{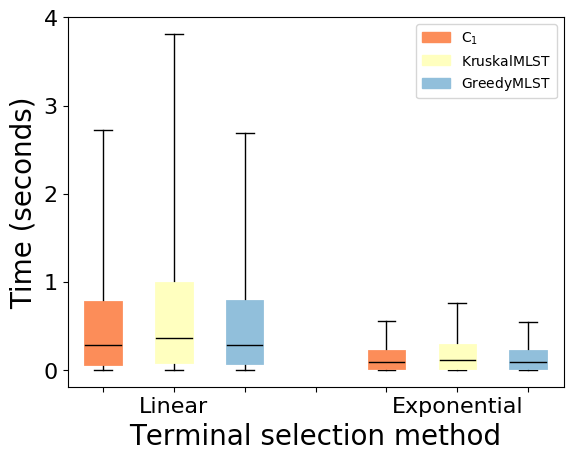}
    \end{subfigure}
    \caption{Experimental running times for computing approximation algorithm solutions w.r.t.\
      $|V|$, $\ell$, and terminal selection method with proportional edge weights on Watts--Strogatz graphs.}
    \label{BoxPlots_WS_uniform_time_heu}
\end{minipage}
\end{figure}

\begin{figure}[h!]
\begin{minipage}{\textwidth}
    \centering
    \begin{subfigure}[b]{0.30\textwidth}
        \includegraphics[width=\textwidth]{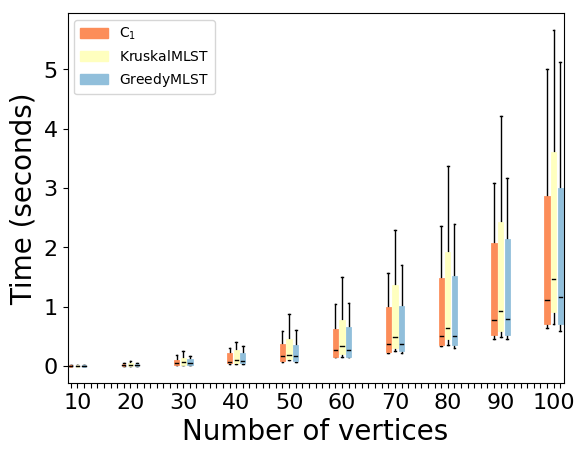}
    \end{subfigure}
    ~ 
    \begin{subfigure}[b]{0.30\textwidth}
        \includegraphics[width=\textwidth]{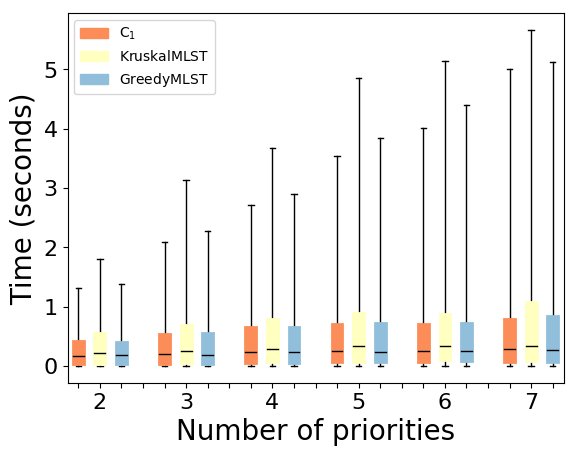}
    \end{subfigure}
    ~
    \begin{subfigure}[b]{0.30\textwidth}
        \includegraphics[width=\textwidth]{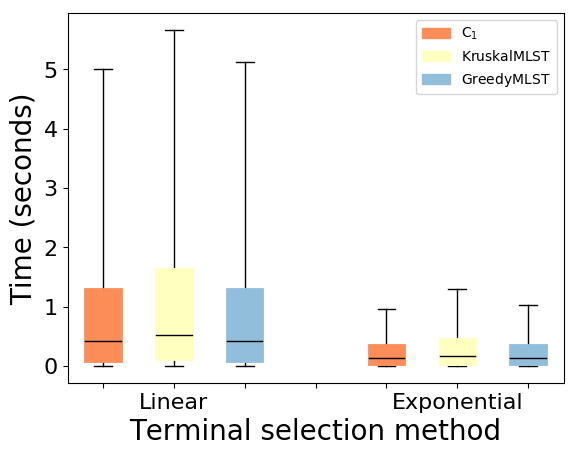}
    \end{subfigure}
    \caption{Experimental running times for computing approximation algorithm solutions w.r.t.\
      $|V|$, $\ell$, and terminal selection method with proportional edge weights on Barab{\'a}si--Albert graphs.}
    \label{BoxPlots_BA_uniform_time_heu}
\end{minipage}
\end{figure}

As is to be expected, on all generators, the average runtime increases as $|V|$ increases, as does the variance in the runtime.  Interestingly, average runtime does not appear to be much affected by the number of priorities, although the variance in runtime does substantially increase with $\ell$.  Runtime is lower for exponentially decreasing terminals, which makes sense given that in this case, the overall size of the terminal sets is smaller than in the linearly decreasing case.

\subsection{Runtime vs. Parameters -- Non-Proportional Edge Costs}

Now we take a look at the affect of the parameters mentioned above on the average runtimes of the approximation algorithm in the non-proportional case.  Figures \ref{BoxPlots_ER_nonuniform_time_heu}--\ref{BoxPlots_BA_nonuniform_time_heu} show the runtime of the algorithms \QoSNU, \Kruskalu, and \Kruskalnu versus $|V|$, $\ell$, and the terminal selection method.

\begin{figure}[h!]
\begin{minipage}{\textwidth}
    \centering
    \begin{subfigure}[b]{0.30\textwidth}
        \includegraphics[width=\textwidth]{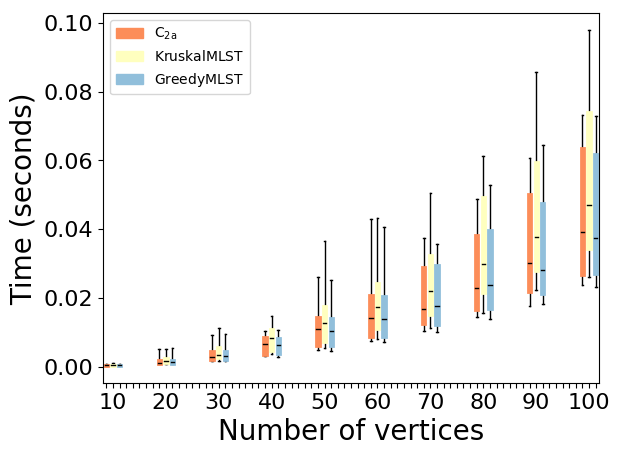}
    \end{subfigure}
    ~ 
    \begin{subfigure}[b]{0.30\textwidth}
        \includegraphics[width=\textwidth]{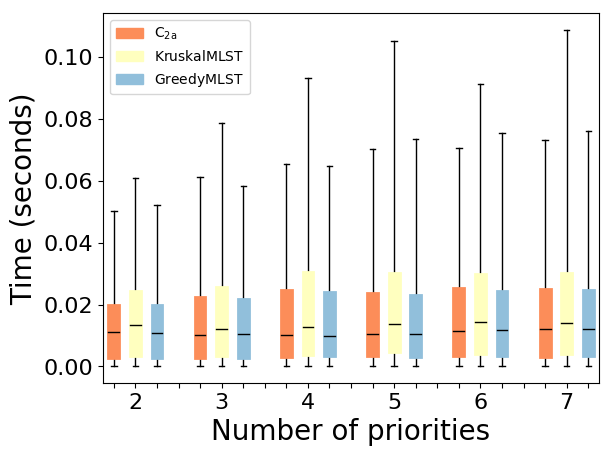}
    \end{subfigure}
    ~
    \begin{subfigure}[b]{0.30\textwidth}
        \includegraphics[width=\textwidth]{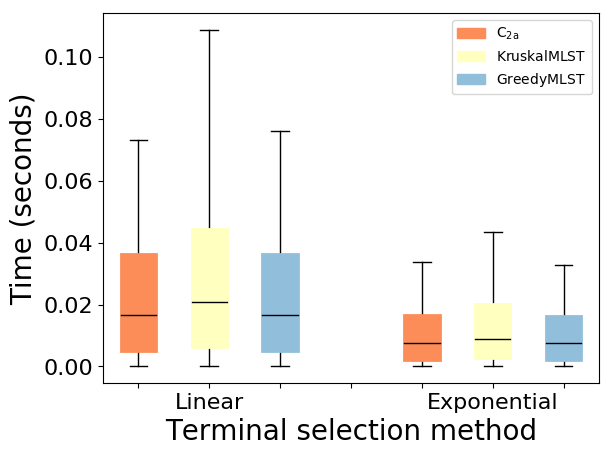}
    \end{subfigure}
    \caption{Experimental running times for computing approximation algorithm solutions w.r.t.\
      $|V|$, $\ell$, and terminal selection method with non-proportional edge weights on Erd\H{o}s--R{\'e}nyi graphs.}
    \label{BoxPlots_ER_nonuniform_time_heu}
\end{minipage}
\end{figure}

\begin{figure}[h!]
\begin{minipage}{\textwidth}
    \centering
    \begin{subfigure}[b]{0.30\textwidth}
        \includegraphics[width=\textwidth]{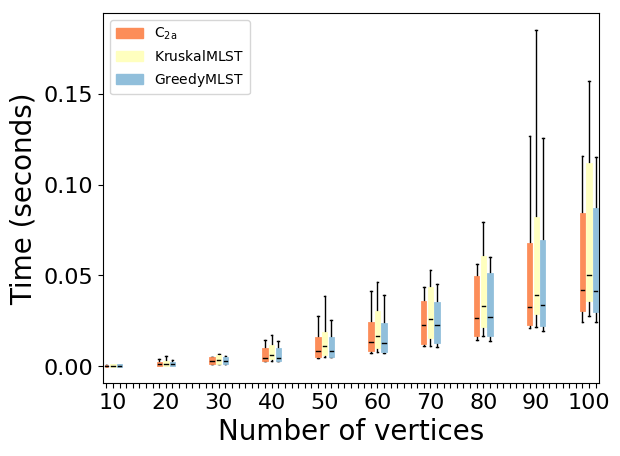}
    \end{subfigure}
    ~ 
    \begin{subfigure}[b]{0.30\textwidth}
        \includegraphics[width=\textwidth]{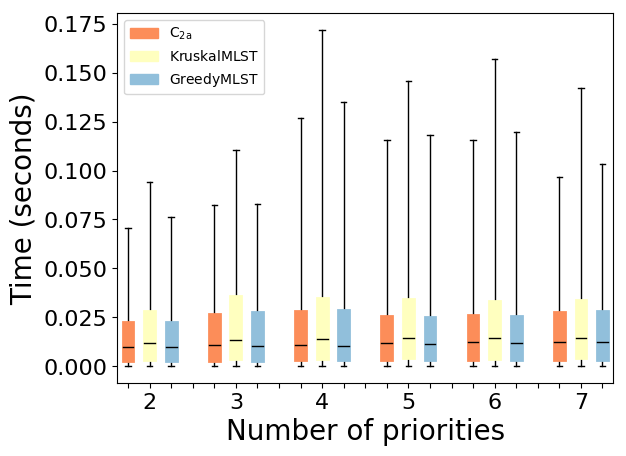}
    \end{subfigure}
    ~
    \begin{subfigure}[b]{0.30\textwidth}
        \includegraphics[width=\textwidth]{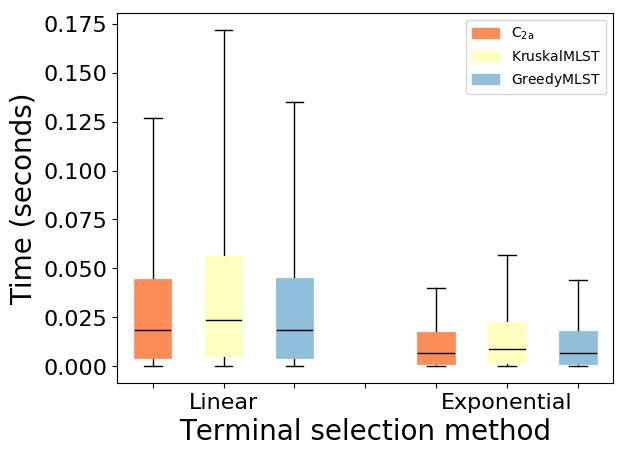}
    \end{subfigure}
    \caption{Experimental running times for computing approximation algorithm solutions w.r.t.\
      $|V|$, $\ell$, and terminal selection method with non-proportional edge weights on Watts--Strogatz graphs.}
    \label{BoxPlots_WS_nonuniform_time_heu}
\end{minipage}
\end{figure}

\begin{figure}[h!]
\begin{minipage}{\textwidth}
    \centering
    \begin{subfigure}[b]{0.30\textwidth}
        \includegraphics[width=\textwidth]{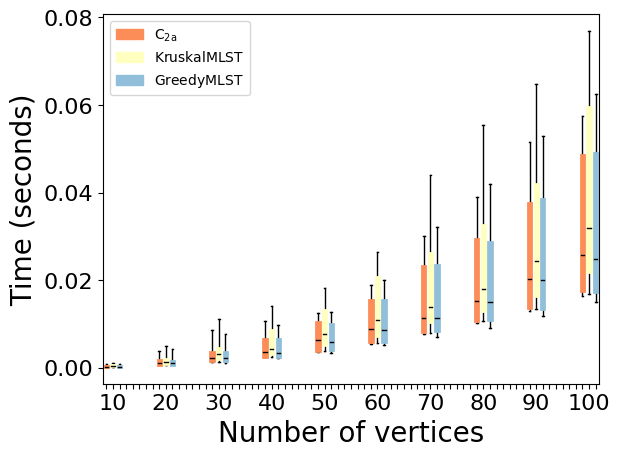}
    \end{subfigure}
    ~ 
    \begin{subfigure}[b]{0.30\textwidth}
        \includegraphics[width=\textwidth]{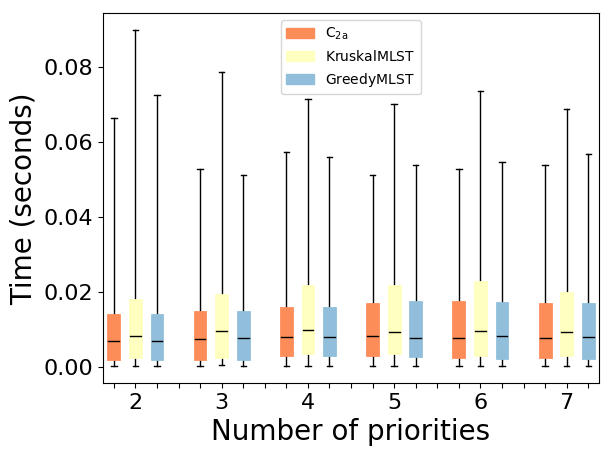}
    \end{subfigure}
    ~
    \begin{subfigure}[b]{0.30\textwidth}
        \includegraphics[width=\textwidth]{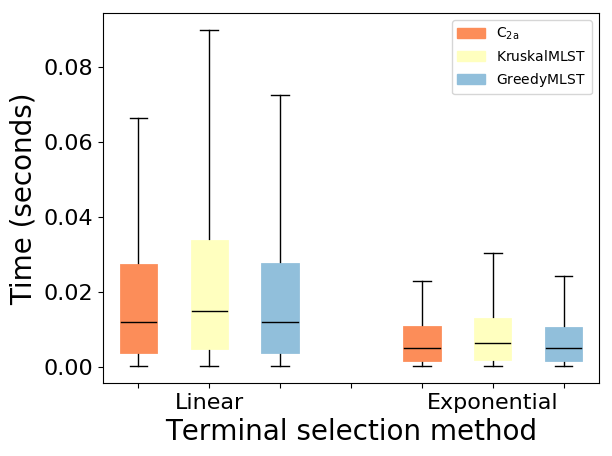}
    \end{subfigure}
    \caption{Experimental running times for computing approximation algorithm solutions w.r.t.\
      $|V|$, $\ell$, and terminal selection method with non-proportional edge weights on Barab{\'a}si--Albert graphs.}
    \label{BoxPlots_BA_nonuniform_time_heu}
\end{minipage}
\end{figure}

The trends are essentially the same as in the case of proportional edge costs; however, we note that the overall runtimes are almost two orders of magnitude smaller on average in the non-proportional trials run here.

\section{ILP Solver}

Without doubt, the most time consuming part of the experiments above was calculating the exact solutions of all MLST instances. For illustration, we show the runtime trends for the ILP solver with respect to $|V|$, $\ell$, and the terminal selection method for proportional edge costs in Figures \ref{BoxPlots_ER_uniform_time}--\ref{BoxPlots_BA_uniform_time} and for non-proportional edge costs in Figures \ref{BoxPlots_ER_nonuniform_time}--\ref{BoxPlots_BA_nonuniform_time} for all of the random graph generators.


\begin{figure}[h!]
\begin{minipage}{\textwidth}
    \centering
    \begin{subfigure}[b]{0.30\textwidth}
        \includegraphics[width=\textwidth]{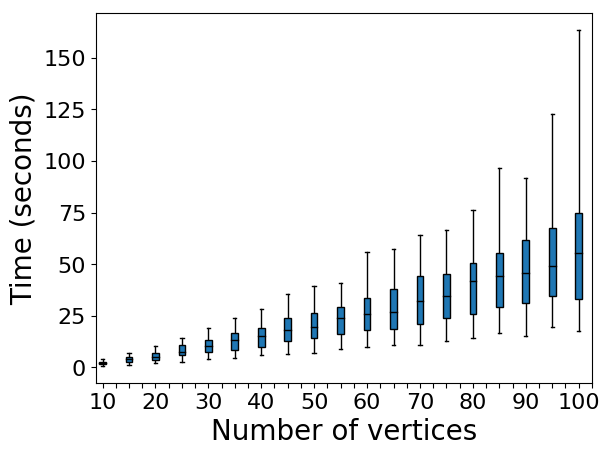}
    \end{subfigure}
    ~ 
    \begin{subfigure}[b]{0.30\textwidth}
        \includegraphics[width=\textwidth]{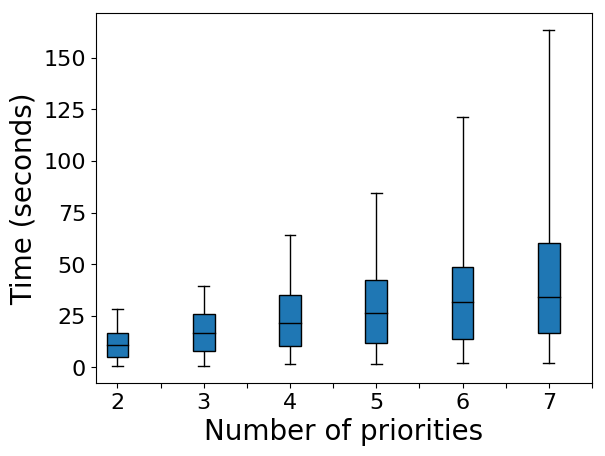}
    \end{subfigure}
    ~
    \begin{subfigure}[b]{0.30\textwidth}
        \includegraphics[width=\textwidth]{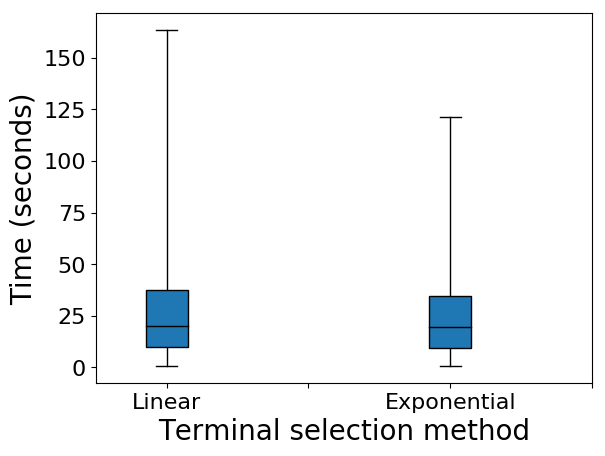}
    \end{subfigure}
    \caption{Experimental running times for computing exact solutions w.r.t.\
      $|V|$, $\ell$, and terminal selection method with proportional edge weights on Erd\H{o}s--R{\'e}nyi graphs.}
    \label{BoxPlots_ER_uniform_time}
\end{minipage}
\end{figure}

\begin{figure}[h!]
\begin{minipage}{\textwidth}
    \centering
    \begin{subfigure}[b]{0.30\textwidth}
        \includegraphics[width=\textwidth]{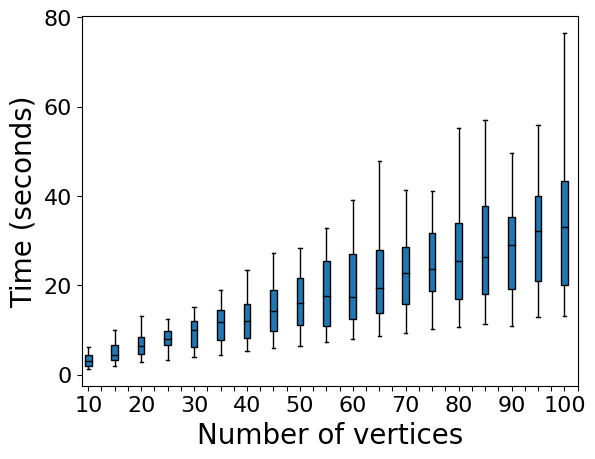}
    \end{subfigure}
    ~ 
    \begin{subfigure}[b]{0.30\textwidth}
        \includegraphics[width=\textwidth]{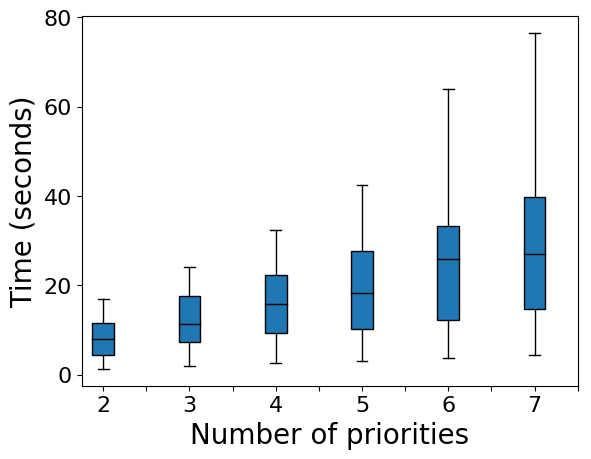}
    \end{subfigure}
    ~
    \begin{subfigure}[b]{0.30\textwidth}
        \includegraphics[width=\textwidth]{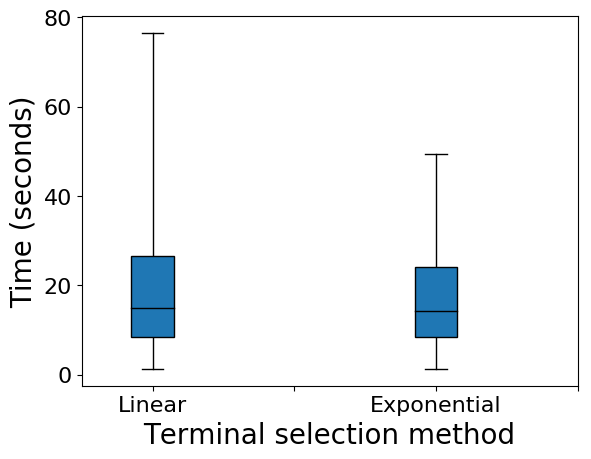}
    \end{subfigure}
    \caption{Experimental running times for computing exact solutions w.r.t.\
      $|V|$, $\ell$, and terminal selection method with proportional edge weights on Watts--Strogatz graphs.}
    \label{BoxPlots_WS_uniform_time}
\end{minipage}
\end{figure}

\begin{figure}[h!]
\begin{minipage}{\textwidth}
    \centering
    \begin{subfigure}[b]{0.30\textwidth}
        \includegraphics[width=\textwidth]{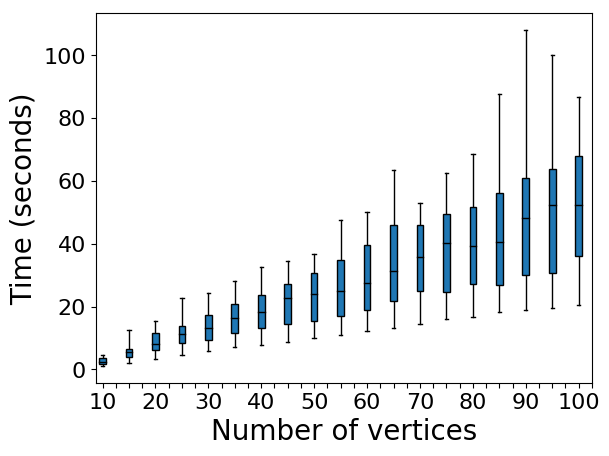}
    \end{subfigure}
    ~ 
    \begin{subfigure}[b]{0.30\textwidth}
        \includegraphics[width=\textwidth]{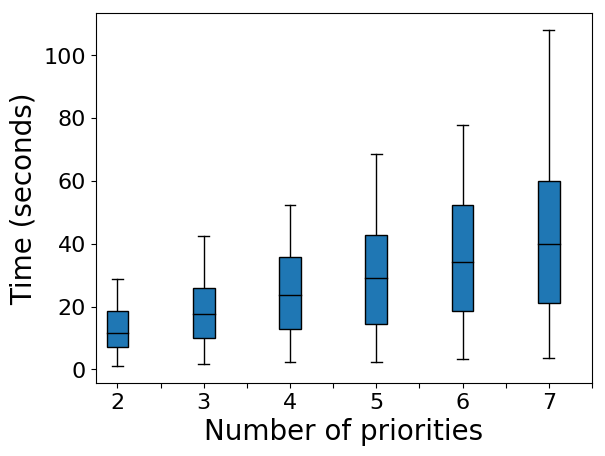}
    \end{subfigure}
    ~
    \begin{subfigure}[b]{0.30\textwidth}
        \includegraphics[width=\textwidth]{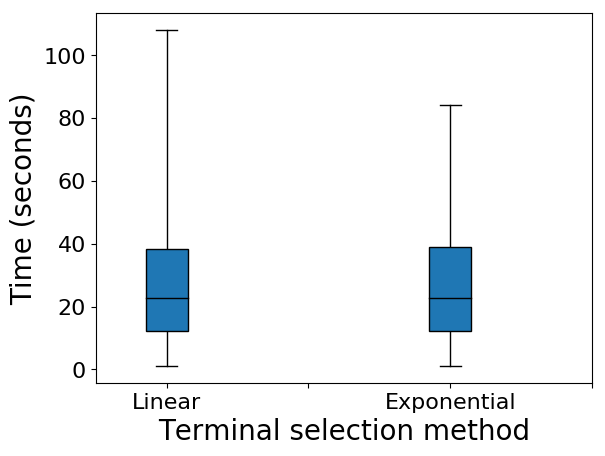}
    \end{subfigure}
    \caption{Experimental running times for computing exact solutions w.r.t.\
      $|V|$, $\ell$, and terminal selection method with proportional edge weights on Barab{\'a}si--Albert graphs.}
    \label{BoxPlots_BA_uniform_time}
\end{minipage}
\end{figure}

\begin{figure}[h!]
\begin{minipage}{\textwidth}
    \centering
    \begin{subfigure}[b]{0.30\textwidth}
        \includegraphics[width=\textwidth]{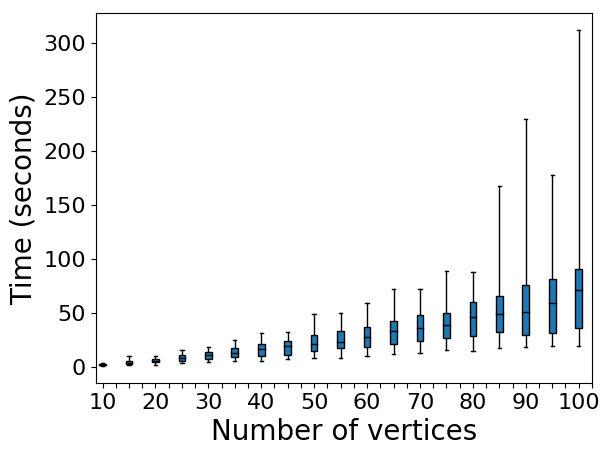}
    \end{subfigure}
    ~ 
    \begin{subfigure}[b]{0.30\textwidth}
        \includegraphics[width=\textwidth]{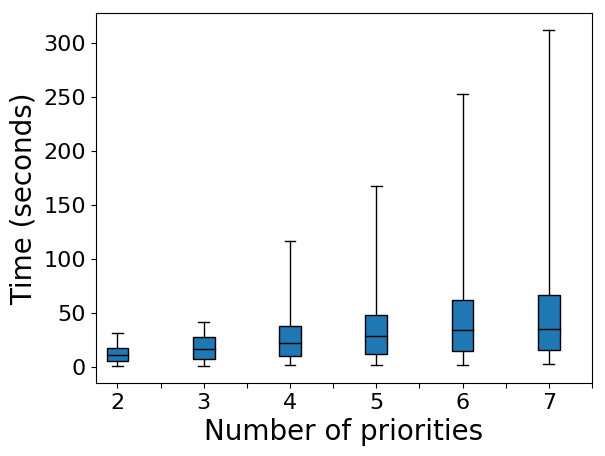}
    \end{subfigure}
    ~
    \begin{subfigure}[b]{0.30\textwidth}
        \includegraphics[width=\textwidth]{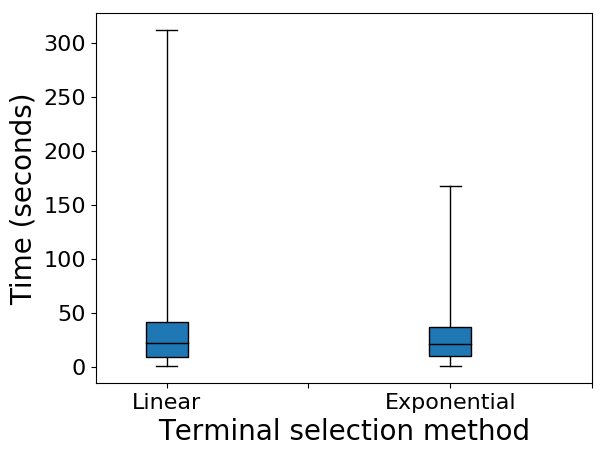}
    \end{subfigure}
    \caption{Experimental running times for computing exact solutions w.r.t.\
      $|V|$, $\ell$, and terminal selection method with non-proportional edge weights on Erd\H{o}s--R{\'e}nyi graphs.}
    \label{BoxPlots_ER_nonuniform_time}
\end{minipage}
\end{figure}

\begin{figure}[h!]
\begin{minipage}{\textwidth}
    \centering
    \begin{subfigure}[b]{0.30\textwidth}
        \includegraphics[width=\textwidth]{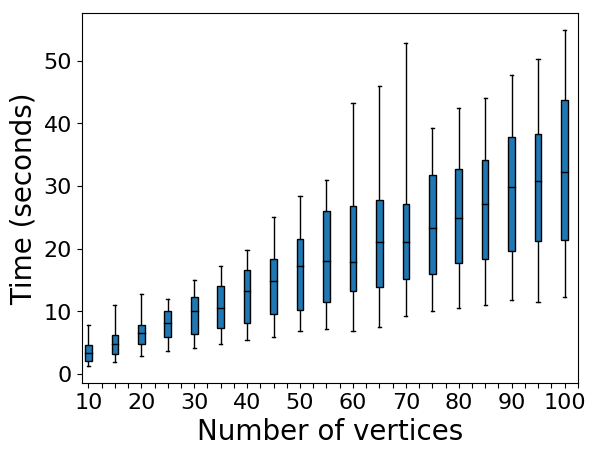}
    \end{subfigure}
    ~ 
    \begin{subfigure}[b]{0.30\textwidth}
        \includegraphics[width=\textwidth]{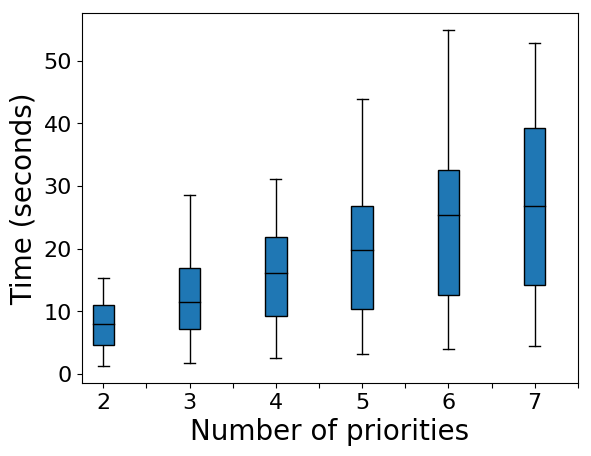}
    \end{subfigure}
    ~
    \begin{subfigure}[b]{0.30\textwidth}
        \includegraphics[width=\textwidth]{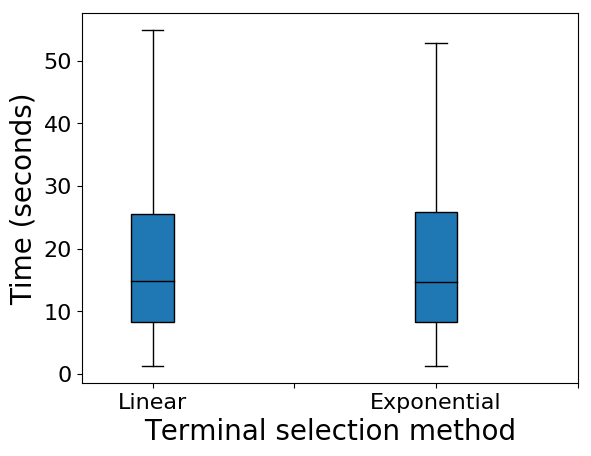}
    \end{subfigure}
    \caption{Experimental running times for computing exact solutions w.r.t.\
      $|V|$, $\ell$, and terminal selection method with non-proportional edge weights on Watts--Strogatz graphs.}
    \label{BoxPlots_WS_nonuniform_time}
\end{minipage}
\end{figure}

\begin{figure}[h!]
\begin{minipage}{\textwidth}
    \centering
    \begin{subfigure}[b]{0.30\textwidth}
        \includegraphics[width=\textwidth]{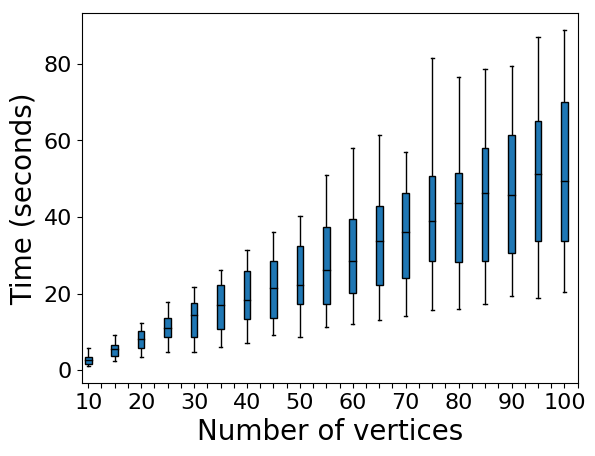}
    \end{subfigure}
    ~ 
    \begin{subfigure}[b]{0.30\textwidth}
        \includegraphics[width=\textwidth]{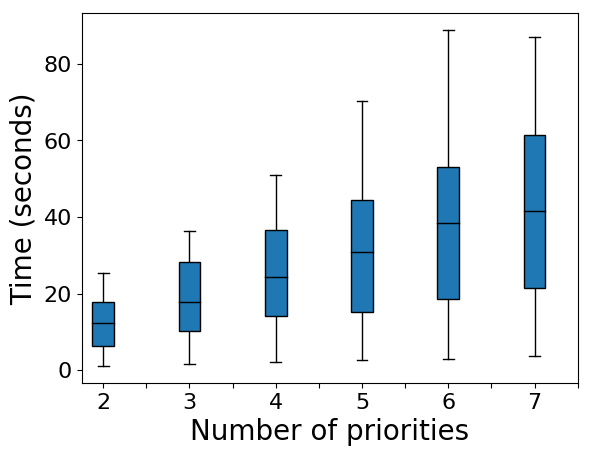}
    \end{subfigure}
    ~
    \begin{subfigure}[b]{0.30\textwidth}
        \includegraphics[width=\textwidth]{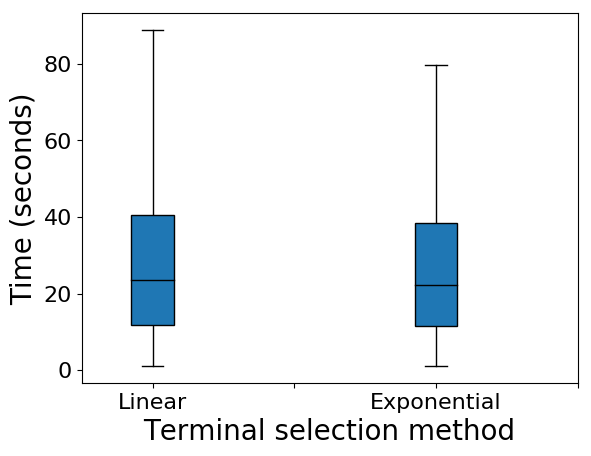}
    \end{subfigure}
    \caption{Experimental running times for computing exact solutions w.r.t.\
      $|V|$, $\ell$, and terminal selection method with non-proportional edge weights on Barab{\'a}si--Albert graphs.}
    \label{BoxPlots_BA_nonuniform_time}
\end{minipage}
\end{figure}

As expected, the running time of the ILP gets worse as $|V|$ and  $\ell$ increase. The running time of the ILP is worse for the linear terminal selection method, again likely because of the overall larger terminal set $T$. Note that the running time of the approximation algorithms are significantly faster than the running time of the exact algorithm. The exact algorithm takes a couple of minutes whereas the approximation algorithms take only a couple of seconds.

\end{document}